%% file: dissipation_qrc.tex
\documentclass[a4paper,twocolumn,11pt,unpublished]{quantumarticle}
\pdfoutput=1

\usepackage[utf8]{inputenc}
\usepackage[english]{babel}
\usepackage[T1]{fontenc}
\usepackage{amsmath,amssymb}
\usepackage{amsthm}
\usepackage{booktabs}
\usepackage{graphicx}
\usepackage{placeins}
\usepackage{float}
\usepackage{needspace}
\usepackage{microtype}
\usepackage[numbers]{natbib}
\usepackage{hyperref}
\usepackage{orcidlink}

\AtBeginDocument{\normalfont}

\hypersetup{allcolors=quantumviolet}
\makeatletter
\renewcommand{\@printtitle}{%
  \color{quantumviolet}%
  \@printtitletextwithappropriatefontsize
}
\def\@printauthor#1#2{%
  \def\footnote{\ClassError{quantumarticle}{You must not put a
  \string\footnote{} command inside the argument of \string\author}{}}%
  \mbox{%
    \ifcsdef{author#1orcid}{%
      \href{https://orcid.org/\csname author#1orcid\endcsname}%
        {\color{black}#2}\,\orcidlink{\csname author#1orcid\endcsname}%
    }{#2}%
  }%
  \ifnumcomp{\the@affiliationcounter}{>}{1}{%
    \textsuperscript{\forlistcsloop{\@@commaspacebefore}%
      {author#1affiliations}\unskip}%
  }{}%
  \ifnumcomp{#1}{<}{\the@authorcounter-1}{, }{%
    \ifnumcomp{#1}{=}{\the@authorcounter-1}{%
      \ifnumcomp{#1}{=}{1}{ and~}{, and~}%
    }{}%
  }%
}
\def\@printauthorextrainfo#1{%
  \vspace{0.55\baselineskip}%
  \raggedright
  \textit{Corresponding author}\par
  \csname @authorname#1\endcsname
  \ifcsdef{author#1orcid}{%
    \,\orcidlink{\csname author#1orcid\endcsname}%
  }{}:%
  \ifcsdef{author#1emails}{%
    \forlistcsloop{\@@spaceafter}{author#1emails}%
  }{}%
  \ifcsdef{author#1homepages}{%
    \ifcsdef{author#1emails}{\unskip, }{}%
    \forlistcsloop{\@@spaceafter}{author#1homepages}%
  }{}%
  \ifcsdef{author#1thanks}{%
    \ifcsdef{author#1emails}{%
      \ifcsdef{author#1homepages}{\unskip, }{\unskip, }%
    }{%
      \ifcsdef{author#1homepages}{\unskip, }{}%
    }%
    \forlistcsloop{\@@spaceafter}{author#1thanks}%
  }{}%
  \par\vspace{-0.55\baselineskip}%
}
\makeatother

\newtheorem{proposition}{Proposition}
\newcommand{\D}{\mathcal{D}}
\newcommand{\Lind}{\mathcal{L}}
\newcommand{\Tr}{\operatorname{Tr}}

\begin{document}

\title{The Organization of Environmental Coupling Shapes What Quantum Reservoirs Remember}

\author[qar,mobile]{Markus Baumann}
\orcid{0009-0007-3575-1006}
\email{markus.baumann@campus.lmu.de}
\author[mobile]{Itamar Fink}
\orcid{0009-0000-2262-4445}
\author[mobile]{Johannes Wittmann}
\orcid{0009-0009-9240-7257}
\author[qar,mobile]{Claudia Linnhoff-Popien}
\orcid{0000-0001-6284-9286}
\author[qar,mobile]{Jonas Stein}
\orcid{0000-0001-5727-9151}
\affiliation[qar]{Quantum Applications and Research Laboratory (QAR-Lab),
Ludwig-Maximilians-Universit\"at M\"unchen, Munich, Germany}
\affiliation[mobile]{Mobile and Distributed Systems Group,
Ludwig-Maximilians-Universit\"at M\"unchen, Munich, Germany}

\input{sections/abstract}
\maketitle

\input{sections/related-work}
\input{sections/introduction}
\input{sections/background}
\input{sections/experimental-section}
\input{sections/evaluation}
\input{sections/conclusion}

\acknowledgments{The authors gratefully acknowledge the research environment and
institutional support provided by the Quantum Applications and Research Laboratory
and the Mobile and Distributed Systems Group at Ludwig-Maximilians-Universit\"at
M\"unchen.}

\section*{Author contributions}
Markus Baumann conceived and led the study, developed the methodology and software,
performed the numerical experiments and formal analysis, curated the data, prepared
the visualizations, and wrote the original draft. Itamar Fink and Johannes Wittmann
contributed to the methodology, validation, interpretation of the results, and review
and editing of the manuscript. Claudia Linnhoff-Popien and Jonas Stein supervised the
project and contributed to its conceptual development. All authors approved the
final manuscript and its submission.

Anthropic's Claude Fable 5 and OpenAI's \mbox{GPT-5.6} were used as assistive tools for
checking selected intermediate derivations in the appendices, language editing, and
code debugging. All AI-assisted material was independently reviewed and verified by
the authors, who take full responsibility for the scientific content and final manuscript.

\Needspace{14\baselineskip}
\begingroup
\fontsize{9.5pt}{11.2pt}\selectfont
\bibliographystyle{quantum}
\bibliography{references}
\endgroup

\clearpage
\onecolumn
\input{sections/methodology}

\end{document}

%% file: sections/abstract.tex
\begin{abstract}
\small
For an open quantum reservoir, how the system forgets is part of how it
computes. Quantum reservoir computing processes input streams with fixed
quantum dynamics and trains only a linear readout. Dissipation can make old
inputs fade, but prior studies commonly fix the environmental process and tune
only its strength. Here we show numerically that the coupling pattern, meaning
whether transitions connect to separate or shared environmental channels,
changes which parts of the input history remain accessible. Paired simulations of
finite spin reservoirs keep the Hamiltonian, inputs, measurements, and readout
fixed. The tested patterns produce distinct task profiles, with no universal
winner. Shared relaxation preserves more recent input history than independent
local loss, and the retained memory changes when the qubits contribute with
different relative phases to the shared decay channel. This
ordering recurs across system sizes, Hamiltonians, input protocols, and targeted
controls.
Environmental coupling is therefore more than a damping parameter: it is a
design layer that shapes not only how quickly information fades, but which
input history remains available for computation.
\end{abstract}

%% file: sections/related-work.tex

\newcommand{\RelaxationTheoryPriorWork}{%
Two established ideas guide this interpretation. Fading memory asks whether
old inputs and the initial state gradually lose their influence
~\cite{JaegerESP2001,BoydChua1985,GrigoryevaOrtega2018}. Information-processing
capacity asks which functions of recent input history remain available to a
readout~\cite{Dambre2012,MartinezPenaIPC2023}. Quantum versions of these ideas
have been developed for finite, input-driven, and subsystem reservoirs
~\cite{MartinezPenaOrtega2023,MartinezPenaOrtega2025,KobayashiESP2024}.
Relaxation spectra provide a complementary fixed-input view of how an open
quantum system forgets~\cite{Minganti2018,Xia2026}. Because switched inputs mix
those responses, we use the spectral calculations as diagnostics rather than
as a complete mechanism.
}

\newcommand{\FiniteSamplePriorWork}{%
Finite-sample training, resolvable expressive capacity, and statistical-noise
robustness have been studied in QRC and physical learning systems
~\cite{Ahmed2024,HuSampling2023}. Our measurement study asks whether
the observed ordering survives when expectation values must be estimated from
the same nominal number of state preparations. It does not model a
platform-specific implementation cost.
}

%% file: sections/introduction.tex
\section{Introduction}
\label{sec:introduction}

Temporal computation requires selective memory: recent inputs must remain
available long enough to be useful, while distant inputs and the processor's
starting state must fade. Reservoir computing achieves this balance with fixed
dynamics and trains only a linear map from measured signals to the desired
output~\cite{Maass2002,JaegerESP2001}. Quantum reservoir computing (QRC) uses a
quantum system to transform each input into measured features~\cite{Fujii2017}.
Because the internal dynamics are not trained, how they retain input history is
central to the computation. Purely unitary evolution does not by itself provide
the required forgetting. Coupling to an environment can therefore be functional
rather than merely detrimental: dissipative dynamics can implement nonlinear
transformations whose dependence on older inputs gradually
vanishes~\cite{ChenNurdin2019}.

Engineered dissipation already prepares and stabilizes states, protects quantum
memories, and drives computation
~\cite{Poyatos1996,Verstraete2009,Pastawski2011,Harrington2022}. In QRC, noise,
damping, optical absorption, measurement backaction, feedback, and
reinitialization can create or regulate temporal memory
~\cite{Chen2020,Sannia2024,Ricci2026,Gotting2025,Mujal2023,Franceschetto2026,
KobayashiFeedback2024,Cindrak2024,SanniaNonMarkov2026,Paparelle2026}. Studies of
coherence and measurements made during a circuit further clarify practically
accessible regimes~\cite{Palacios2024,Hu2024}. Sannia \emph{et al.}, for example, showed
that tuning uniform, independent qubit relaxation can improve a continuously
driven spin reservoir~\cite{Sannia2024}. Related work likewise tunes
scalar controls such as drive strength, interaction strength, input
duration, and damping rate~\cite{MartinezPenaPRL2021,Baumann2026}. These studies
establish that dissipation can help, but they largely choose the environmental
process in advance and ask how strongly it should act.

An environment specifies more than a rate. It also determines which system
transitions lose information separately and which couple through the same
channel. With uniform local relaxation, each qubit has its own route into the
environment. With collective relaxation, one shared channel acts directly on a collective
combination of qubit transitions. Other combinations reach it only after the
Hamiltonian mixes them together. This distinction is familiar from collective
emission and structures protected from noise in quantum
networks~\cite{Dicke1954,Cabot2018}. Its computational consequence for a
quantum system driven by inputs is less clear. Does the environment merely set how
quickly a processor forgets, or can its pattern of coupling select which parts
of an input history remain available?

Here we answer this question numerically for finite reservoirs of driven spins. We
show that changing the organization of environmental coupling changes what
earlier inputs a fixed readout can recover. We vary which transitions share a
channel and how a common, explicitly defined coupling budget is distributed
among them. Within each principal pair, the Hamiltonian, input sequence,
measured Pauli observables, training procedure, and data split are identical.
Across six types of environmental process, no design dominates every benchmark.
Instead, different couplings favor memory, nonlinear processing, parity, and
chaotic forecasting in different ways. The environment therefore changes the
kind of temporal information available to the readout, not simply whether one
score improves.

The clearest recurring contrast is between collective and uniform local relaxation.
Collective relaxation leaves more recent input history accessible to the linear
readout and improves recovery of a nonlinear function of that history. Both
effects appear in every principal paired instance with five qubits. The ordering
remains after the longer washout, when tested starting-state effects are
negligible relative to the task contrast.
It recurs from four to eight qubits, across four spin Hamiltonian ensembles, and
after replacing continuous input driving with encoding through reset operations. It also
survives separate comparisons that match average dissipative activity or a
representative relaxation timescale, and when each design independently selects
its operating point from a bounded range. No obvious scalar measure of
dissipation therefore explains the result.

The most direct test changes only the collective transition addressed by a
shared channel. Rotating this direction changes accessible memory even though
the number and nonzero strengths of the coupled directions, the total weight,
and the weight on every qubit all remain fixed. The dependence on coupling
direction recurs in a separate study with six qubits. Redistributing coupling weight
within one type of process likewise changes memory. Thus even detailed coarse
descriptions of dissipation do not fix the computation: which collective
transition the environment directly addresses is itself a design variable. The
conclusions concern the tested finite reservoirs. They do not establish a
universal optimum, an asymptotic scaling law, or a unique microscopic cause. A
common coupling budget is a controlled comparison, not complete physical
equivalence.

These findings connect QRC with the broader practice of engineering
couplings between systems and environments for state preparation, information processing,
and memory protection
~\cite{Poyatos1996,Verstraete2009,Pastawski2011,Harrington2022,Cattaneo2019,
Brehm2021,Sheremet2023,CattaneoPRX2023}. Here, memory means information about
earlier inputs that remains linearly recoverable from the specified Pauli
observables after the settling period. It is not a claim about quantum memory
lifetime. Within this scope, the architectural implication is clear. The
environment is not merely an error source or a clock that sets how quickly
information disappears. The transitions it can access shape which parts of the
past remain usable. Environmental coupling should therefore be designed
alongside the Hamiltonian, input encoding, and measurement. For temporal quantum
processing, the environment is not merely the boundary of the processor. It is
part of the processor.

%% file: sections/background.tex
\section{\texorpdfstring{Reservoir model and controlled\\
environmental coupling}{Reservoir model and controlled environmental coupling}}
\label{sec:model}

\subsection{The fixed driven reservoir}

We keep the coherent reservoir fixed and change only its coupling to the
environment.
The reservoir is the continuously driven network of \(N\) interacting qubits
introduced in Ref.~\cite{Sannia2024}. During the \(k\)th input interval, a
scalar value \(s_k\in[0,1]\) is held constant for a duration \(\Delta t\), and
the qubits evolve under
\begin{equation*}
H(s_k)=\sum_{i<j}J_{ij}\sigma_i^x\sigma_j^x
       +h\sum_i\sigma_i^z+h(1+s_k)\sum_i\sigma_i^x .
\end{equation*}
The \(J_{ij}\) term couples the qubits, the \(z\) field sets their static
energy scale, and the final term writes \(s_k\) into the transverse drive.
Each reservoir instance uses one sampled set of couplings \(J_{ij}\).
This fixes how the input enters and how information moves before coupling
organization is varied.

\subsection{\texorpdfstring{Environmental dynamics and\\
reference relaxation}{Environmental dynamics and reference relaxation}}

The Hamiltonian describes isolated evolution, while the environment adds a
second contribution. For the memoryless model used here, the density matrix
obeys the Lindblad master equation
~\cite{Gorini1976,Lindblad1976,BreuerPetruccione2002},
\begin{equation}
\begin{aligned}
\dot{\rho}=\Lind_s(\rho)
 &=-i[H(s),\rho]+\sum_k\D[L_k](\rho),\\
\D[L](\rho)
 &=L\rho L^\dagger-\tfrac12\{L^\dagger L,\rho\}.
\end{aligned}
\label{eq:gkls}
\end{equation}
We use Eq.~\eqref{eq:gkls} as a target engineered Markovian Gorini,
Kossakowski, Lindblad, and Sudarshan (GKLS) generator,
not as the generic weak-coupling master equation of an arbitrary common bath.
The commutator gives the coherent motion. Each \(\D[L_k]\) describes one
environmental action on average. The operator \(L_k\) identifies the
transition and its coefficient sets the strength. Both parts act continuously.
Because the dissipative term is quadratic in its jump operator,
\[
\D[aL]=\lvert a\rvert^2\D[L],
\]
a coefficient \(\sqrt{\gamma}\) makes \(\gamma\) the corresponding rate
scale.

The standard model gives each qubit its own relaxation channel. Following
Sannia \emph{et al.}~\cite{Sannia2024}, we write
\begin{equation*}
L_i=\sqrt{\gamma}\,\sigma_i^-,\qquad i=1,\ldots,N .
\end{equation*}
The lowering operator \(\sigma_i^-\) maps the excited state of qubit \(i\) to
its ground state. One operator per site gives independent relaxation, while
the common \(\sqrt{\gamma}\) sets the rate. This is \emph{uniform local
relaxation}, the reference point from which we vary coupling organization.

\subsection{Generator-level organization of coupling}

The organization of environmental coupling has two operational coordinates. A
\emph{jump family} specifies which combinations of system operators are
directly coupled by the environmental channel. A \emph{rate profile}
redistributes the assigned coupling weight within a fixed family.

The distinction between strength and coupling organization is easiest to see with two
qubits. Independent local relaxation uses two separate channels,
\begin{equation*}
L_1\propto\sigma_1^-,
\qquad
L_2\propto\sigma_2^-,
\end{equation*}
so that the environment acts on the two sites individually. A shared
relaxation channel instead uses one combined operator,
\begin{equation*}
L_{\mathrm c}\propto\sigma_1^-+\sigma_2^-.
\end{equation*}
Here the two relaxation amplitudes are combined before the environmental
action occurs. To see the consequence, consider the one-excitation states
\begin{equation*}
\lvert+\rangle
  =\frac{\lvert10\rangle+\lvert01\rangle}{\sqrt{2}},
\qquad
\lvert-\rangle
  =\frac{\lvert10\rangle-\lvert01\rangle}{\sqrt{2}}.
\end{equation*}
For the symmetric state \(\lvert+\rangle\), the two contributions add, and the
shared channel directly connects the state to \(\lvert00\rangle\). For the
antisymmetric state \(\lvert-\rangle\), they cancel, so this channel does not
relax the state directly. The state is not necessarily protected forever: the
Hamiltonian can mix the symmetric and antisymmetric combinations, thereby
giving the latter an indirect route to the environment. Nevertheless, the two
environmental-coupling designs are physically different. Independent channels
expose each site separately, whereas a shared channel exposes only one
collective combination directly.

\begin{figure*}[!t]
\centering
\includegraphics{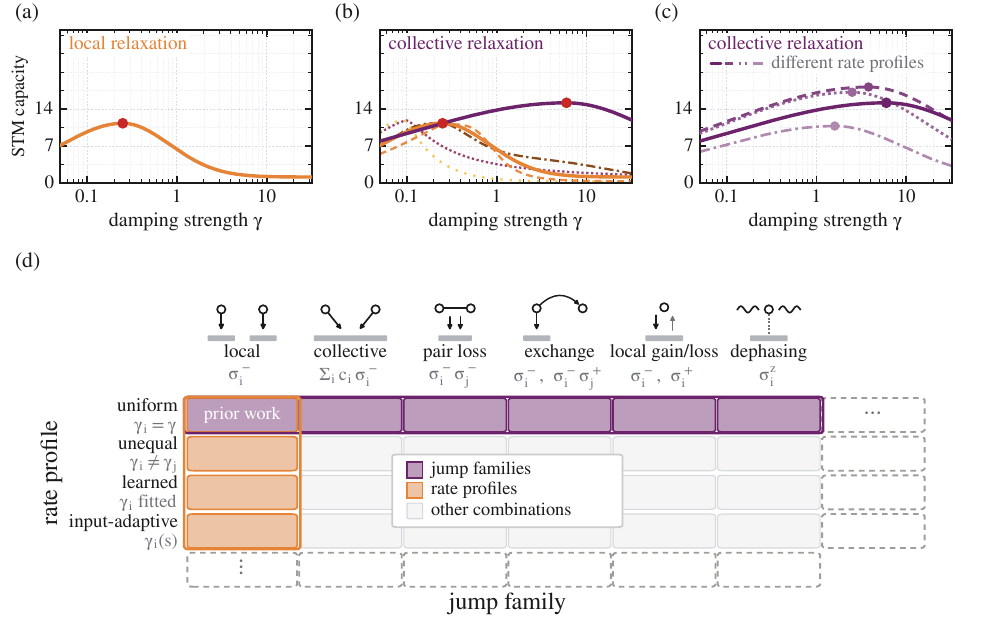}
\caption{\textbf{Environmental coupling has two complementary design
coordinates.}
\emph{(a)} Previous work primarily varies the overall strength of uniform
local relaxation~\cite{Sannia2024}. \emph{(b)} The jump family changes which
system combinations couple directly to the environment. \emph{(c)} Within
collective relaxation, alternative rate profiles illustrate how redistributing
the coupling weight can raise or lower STM capacity relative to the
reference profile.
\emph{(d)} Highlighted cells mark the two principal coordinate slices.
Pale-gray cells show other combinations. Panels (a) through (c) show illustrative
STM capacity trends, not fitted task data.}
\label{fig:space}
\end{figure*}

Figure~\ref{fig:space} maps these choices across the tested design space. The
comparison of local and collective relaxation is intuitive, but the
distinction must be defined at the generator level.

Let \(\{F_\mu\}\) be a fixed set of traceless system operators with
\[
\Tr(F_\mu^\dagger F_\nu)=\delta_{\mu\nu}.
\]
Expanding every jump operator as
\[
L_k=\sum_\mu A_{k\mu}F_\mu,
\qquad
C=A^\dagger A\succeq0,
\]
makes this distinction invariant. The coefficient \(A_{k\mu}\) states how
strongly elementary action \(F_\mu\) contributes to environmental channel
\(k\). The matrix \(C\) records which elementary actions are coupled and how
the coupling weight is arranged among them. Equivalent remixings of the jump
list change the individual \(L_k\) and \(A\), but leave \(C\), and therefore
the dissipative generator, unchanged.
Kossakowski geometry here means the support, rank, and orientation of \(C\): a
family fixes its operator sector and allowed support, whereas within-family
parameters set its nonzero weights and, where applicable, its direction within
that support. The construction is invariant under equivalent jump-list
remixings but remains relative to the declared elementary-operator basis.

For the one-body lowering block, local and collective relaxation differ as
follows:
\[
C_{\mathrm{lower}}^{\mathrm{local}}\propto I_N,
\qquad
C_{\mathrm{lower}}^{\mathrm{collective}}
  \propto \boldsymbol c\boldsymbol c^\dagger,
\]
where
\(\boldsymbol c=(c_1,\ldots,c_N)^{\mathsf T}\) parameterizes the shared
operator
\[
L_{\mathrm c}\propto\sum_i c_i\sigma_i^-.
\]
With nonzero local rates, \(I_N\) spans \(N\) independently coupled lowering
directions, whereas
\(\boldsymbol c\boldsymbol c^\dagger\) spans one directly coupled collective
direction. Rotating the written local jump operators cannot turn this
full-rank block into the rank-one shared block. Thus local and collective
relaxation are distinct generators.
\Needspace{3\baselineskip}
Unequal local relaxation instead changes the diagonal weights within the local
block and thus moves along the rate-profile axis.

\Needspace{4\baselineskip}
Table~\ref{tab:designs} is the exact catalog of the tested environmental
actions. Figure~\ref{fig:space} provides their intuitive map. We say gain/loss
rather than thermalization because its fixed transitions need not produce a
thermal state.

\subsection{Common structural normalization and paired comparison}

Unlike families contain different numbers and types of jump operators.
Assigning the same coefficient to every operator would therefore give more
total weight to a family merely because it contains more operators. We instead
fix
\begin{equation*}
\mathcal B=\sum_k\Tr(L_k^\dagger L_k).
\end{equation*}
Each term is the squared operator size assigned to one environmental channel.
In the orthonormal basis above, \(\mathcal B=\Tr C\). Fixing
\(\mathcal B\) therefore provides a common structural bookkeeping scale: the
total assigned operator weight remains fixed while coupling organization
changes.

For the principal fixed-\(\mathcal B\), common-protocol comparison, we vary
only the jump family. The Hamiltonian instance, input sequence, Pauli
observables, linear readout, training, selection, and test splits, and assigned
operator weight \(\mathcal B\) remain fixed.

This isolates structural change, not physical equivalence. In the one-body
comparison, fixed \(\mathcal B=\Tr C\)
holds assigned coupling weight fixed while changing the cross-site dissipative
correlations encoded in
\(C\). It does not equalize activity,
the relaxation spectrum, energy flow, entropy production, implementation, or
hardware cost. The fixed-\(\mathcal B\) protocol therefore defines the cross-family
structural slice, while activity, gap, and operating-point controls additionally
validate only the memory ordering between local and collective relaxation.

\subsection{Readout and computational tasks}

After each input interval, local Pauli expectations and same-axis two-qubit
correlations summarize the reservoir state,
\[
\{\langle X_i\rangle,\langle Y_i\rangle,\langle Z_i\rangle,
\langle X_iX_j\rangle,\langle Y_iY_j\rangle,\langle Z_iZ_j\rangle\}.
\]
They provide \(3N+3\binom{N}{2}\) features, or \(45\) at \(N=5\). Fixed-profile
jump-family experiments fit only the linear readout. Separately identified
profile and operating-point studies also select dissipative parameters on data disjoint
from final testing. Appendix~\ref{app:experiment1} gives the ridge control.

Four benchmarks ask what information those features retain. Short-term memory
(STM)~\cite{JaegerSTM2001} reconstructs earlier inputs. NARMA-10
is the nonlinear autoregressive moving-average task of order ten. It tests a
nonlinear function of recent input history~\cite{AtiyaParlos2000}.
Parity combines binary memory with nonlinear processing. Closed-loop
forecasting continues a chaotic signal generated by the delay equation
introduced by Mackey and Glass~\cite{MackeyGlass1977}.
We denote its 150-step closed-loop forecast by MG-150. NARMA-10 uses
normalized mean-squared error (NMSE), while MG-150 uses mean-squared error
(MSE).
Capacity is maximized for STM and parity. Error is minimized for NARMA-10 and
forecasting.

For the adaptive-profile, finite-sampling, and alternative-control studies,
any additional mean matching, estimator choice, calibration, or selection
rule is stated explicitly.

\begin{table}[!t]
\centering
\caption{\textbf{Dissipative designs.} The first two rows differ only in rate
profile. The displayed jump operators are convenient representatives of the
corresponding GKLS generators. Equivalent jump decompositions are not counted
as different designs.}
\label{tab:designs}
\footnotesize
\setlength{\tabcolsep}{3pt}
\renewcommand{\arraystretch}{1.06}
\begin{tabular}{@{}p{0.29\columnwidth}p{0.65\columnwidth}@{}}
\toprule
{\raggedright design\par}
& {\raggedright operators and direct environmental action\par}\tabularnewline
\midrule
{\raggedright\textbf{uniform local relaxation}~\cite{Sannia2024}\par}
& {\raggedright \(L_i=\sqrt{\gamma}\,\sigma_i^-\),
\(i=1,\ldots,N\). Equal rates give independent sitewise relaxation and define
the reference design.\par}\tabularnewline
{\raggedright\textbf{unequal local relaxation}\par}
& {\raggedright \(L_i=\sqrt{\gamma_i}\,\sigma_i^-\),
\(i=1,\ldots,N\), with the \(\gamma_i\) not all equal. Independent sitewise
relaxation with qubit-dependent timescales.\par}\tabularnewline
\midrule
{\raggedright\textbf{collective relaxation}\par}
& {\raggedright \(\{\sum_i c_i\sigma_i^-\}\). One shared relaxation
channel.\par}\tabularnewline
{\raggedright\textbf{pair loss}\par}
& {\raggedright \(\{\sigma_i^-\sigma_j^-\}_{i<j}\). Joint two-excitation
removal.\par}\tabularnewline
{\raggedright\textbf{exchange-assisted relaxation}\par}
& {\raggedright
\(\{\sigma_i^-\}\cup\{\sigma_i^-\sigma_j^+\}_{i\ne j}\). Local lowering
supplemented by incoherent excitation exchange.\par}\tabularnewline
{\raggedright\textbf{local gain/loss}\par}
& {\raggedright \(\{\sigma_i^-,\sigma_i^+\}\). Fixed downward and upward
transitions.\par}\tabularnewline
{\raggedright\textbf{dephasing}\par}
& {\raggedright \(\{\sigma_i^z\}\). Phase randomization without excitation
loss.\par}\tabularnewline
\bottomrule
\end{tabular}
\end{table}
\FloatBarrier

%% file: sections/experimental-section.tex
\begin{figure*}[!t]
\centering
\includegraphics{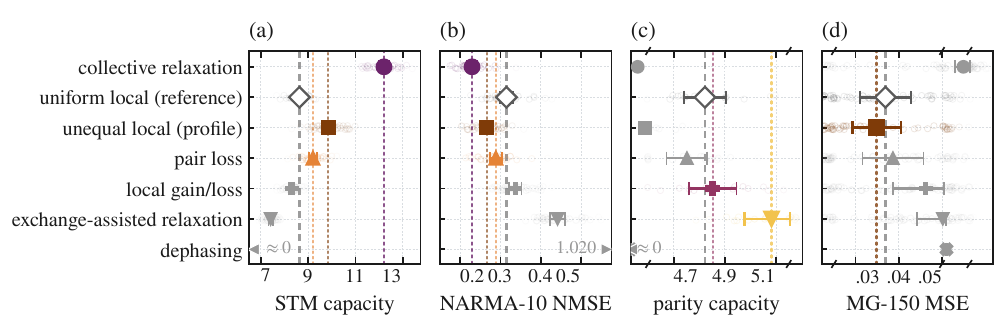}
\caption{\textbf{Different environmental couplings favor different
computations.}
Scores are shown for continuously driven \(N=5\) reservoirs. Each row
represents one dissipative design. Pale circles show individual reservoirs,
large symbols show the mean, and horizontal bars show pointwise \(95\%\)
intervals. The gray diamond and dashed line mark uniform local relaxation as
the reference. Colored symbols and matching dotted lines identify every design
whose mean outperforms that reference. Higher values are better for STM and
parity, whereas lower values are better for NARMA-10 and MG-150. The broken
axes in (c,d) enlarge the region where the designs differ while retaining all
observations. All designs are compared at the same assigned operator weight
\(\mathcal B\), which provides a common structural reference rather than
physical equivalence. Unequal local differs from the reference only through
its rate profile.}
\label{fig:map}
\end{figure*}

\section{\texorpdfstring{Computational effects of coupling\\
organization}{Computational effects of coupling organization}}
\label{sec:performance}

We traverse the two-coordinate design space of Figure~\ref{fig:space}: first
across jump families, then across rate profiles within a family, and finally
across both axes jointly. We then ask whether exact-expectation differences
remain visible under finite measurement budgets.

\Needspace{6\baselineskip}
The fixed-weight task map establishes task-dependent computation before we
focus on its most strongly replicated consequence: the contrast in
readout-accessible recent-input memory between local and collective relaxation.

\subsection{Reference model and how to read the results}
\label{sec:results-guide}

Uniform local relaxation is the reference, with one equal-rate lowering
channel per qubit. Its family also includes unequal, learned, and
input-adaptive profiles. The main comparison uses the equal-phase rank-one
collective channel with \(c_i=1\) (Appendix~\ref{app:shared}). Profile-optimized
models instead learn within-family coefficients.

The four tasks retain their native metrics: STM and parity capacities are
maximized, whereas NARMA-10 NMSE and MG-150 MSE are minimized.

Figure~\ref{fig:map} reports the absolute \(N=5\) scores in each task's native
metric and favorable direction. The dashed vertical line in each panel marks
the uniform-local mean. Family-colored dotted lines and saturated markers
identify all designs with a better mean in that direction. Faint rings show
individual reservoir instances. The
panels retain their own scales and must be read independently.
Table~\ref{tab:map} in Appendix~\ref{app:shared} gives the exact means and
standard errors.

\subsection{Axis I: Jump-family choice changes the task profile}
\label{sec:map}

We first ask whether changing which system combinations couple to the
environment alters what the reservoir can compute.
\Needspace{5\baselineskip}
Each paired comparison
shares its Hamiltonian, input sequence, operating point, observables, readout,
data split, and structural budget \(\mathcal B\). Only the jump family
changes. Uniform local relaxation supplies the reference, while the
reset-encoded architecture introduced by Fujii and Nakajima (FN)~\cite{Fujii2017} is shown
separately as an external reservoir baseline.

The resulting task profiles differ, and no continuously driven design
dominates all four tasks. Collective relaxation leads STM and NARMA-10 under
the common protocol, but performs worse on parity and long closed-loop
forecasting. Exchange-assisted relaxation provides the clearest parity
improvement.

Under fixed \(\mathcal B\), the map shows protocol-specific specialization, not
a universal winner. A descriptive STM strength sensitivity preserves the
collective lead on the tested grid (Appendix~\ref{app:strength-sensitivity}).
We focus on local versus collective relaxation because it gives the cleanest
comparison within one lowering action and the strongest recurring memory change.

\subsection{Main jump-family case study: collective relaxation versus the
reference model}
\label{sec:collective-case}

We compare complete fixed-budget endpoints with
\(C_{\mathrm{lower}}^{\mathrm{local}}\propto I_N\) and
\(C_{\mathrm{lower}}^{\mathrm{collective}}\propto\boldsymbol c\boldsymbol
c^\dagger\). This endpoint comparison does not identify rank, delocalization,
coefficient-phase pattern, or drive alignment as separately causal.

\begin{figure*}[!t]
\centering
\includegraphics{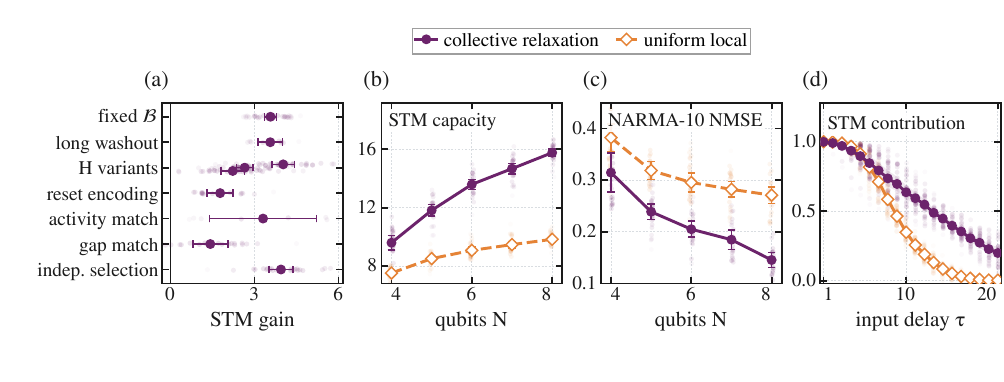}
\caption{\textbf{Collective relaxation preserves more usable input memory
than uniform local relaxation.}
\emph{(a)} STM gains under the main protocol and targeted controls. Values
above zero favor collective relaxation. \emph{(b,c)} STM capacity and
NARMA-10 error across the tested reservoir sizes. Collective relaxation
retains more memory and gives lower error. \emph{(d)} Memory from each previous
input. The slower purple decline shows that older inputs remain accessible for
longer. Pale marks show individual reservoirs. Large symbols, lines, bars, and bands show
means and \(95\%\) intervals.}
\label{fig:collective-case}
\end{figure*}

At \(N=5\), collective relaxation increases STM from \(8.634\) to \(12.206\).
This is an absolute paired gain of \(3.57\) STM-capacity units and a relative
increase of \(41\%\), with a \(95\%\) interval of \([3.36,3.78]\) and \(32/32\)
paired wins. NARMA-10 NMSE decreases from \(0.315\) to \(0.229\), corresponding
to a \(27\%\) reduction in error, and improves in every paired reservoir.

\Needspace{7\baselineskip}
Figure~\ref{fig:collective-case}(b,c) provides a finite-size recurrence check of
the principal \(N=5\) ordering, not an asymptotic scaling analysis. Across 24
paired lineages per size, collective relaxation has the highest mean STM and
the lowest NARMA-10 error from \(N=4\) through \(N=8\).

Before interpreting the larger STM score as useful input memory, we test what
is being remembered. Removing every jump operator while retaining the
Hamiltonian, input protocol, and readout leaves STM and parity near zero,
raises NARMA-10 NMSE above one, and preserves initialization dependence
(Table~\ref{tab:map} in Appendix~\ref{app:shared}). Coherent driving alone therefore reproduces
neither the task profile nor controlled forgetting. This unmatched diagnostic contrasts
with reset FN, which supplies non-unitary forgetting.

We next distinguish input memory from persistence of the starting state. The
same switched input sequence is applied from four widely separated states, and
their maximum trace distance is tracked. The convergence records in
Appendix~\ref{app:experiment1} show that local relaxation and pair loss
converge across \(N=4,5,6\) and
remain converged through the 1200-input continuation.

\Needspace{3\baselineskip}
Collective relaxation
converges more slowly, but reaches the same regime at the principal \(N=5\)
setting and in the tested \(N=6\) lineages.

\Needspace{4\baselineskip}
At \(N=5\), it approaches numerical
precision by the strict 800-input washout and reaches numerical precision
during the specified continuation to 1200 inputs.

\Needspace{9\baselineskip}
A ten-pair subset of the principal \(N=5\) lineages then uses the 800-input
washout before STM scoring. Cross-initialization differences are negligible,
while collective relaxation retains \(+3.564\,[3.124,4.004]\) STM units with
\(10/10\) wins (Figure~\ref{fig:collective-case}(a)). For continuous-drive
NARMA-10, all 32 principal pairs are replayed after an independent 600-input
prefix, leaving the scored training and test rows unchanged. At the resulting
800-input washout, NMSE is \(0.315\) locally and \(0.230\) collectively, a
favorable paired difference of \(0.0851\,[0.0716,0.0986]\) with \(32/32\)
wins. The effect change between the 800-input and 200-input washouts is
\(-0.00007\,[-0.00057,0.00043]\). Across four initial states on the first eight
pairs, the largest NARMA-10 score spread after the 800-input washout is
\(2.18\times10^{-5}\), far
below the smallest paired ground-state improvement, \(0.0148\). The stable
endpoints after the 800-input washout therefore show that neither ordering is produced
by the residual initial-state dependence exposed after the 200-input washout.

The comparison is also repeated under four spin-Hamiltonian ensembles that
change the interaction axis, drive axis, interaction form, or connectivity.
Every ensemble retains a positive collective-minus-reference STM difference,
ranging from \(+2.22\) to \(+4.02\) units
(Figure~\ref{fig:collective-case}(a)). These replications preserve the continuous
encoding and Pauli readout, so they establish recurrence within the tested
driven-spin architecture. Appendix~\ref{app:experiment1} describes the
Hamiltonian variants and ridge control.

To test whether the ordering is tied to continuous-drive encoding, we replace
it with input-by-reset while retaining the \(N=5\) \(XX+Z\) processor, Pauli
readout, task definitions, and fixed structural budget. After an 800-input
washout, collective relaxation increases STM from \(4.596\) to \(6.369\), a
gain of \(1.773\,[1.312,2.235]\), and decreases NARMA-10 NMSE from \(0.673\)
to \(0.491\), a favorable reduction of \(0.182\,[0.144,0.220]\), with \(16/16\)
favorable pairs for both endpoints
(Figure~\ref{fig:reset-architecture} and
Appendix~\ref{app:reset-architecture}). The ordering between local and
collective relaxation therefore recurs across two tested input architectures rather than being an
artifact of continuous driving.

\subsection{Axis II and joint design: rate profiles refine the chosen
jump family}
\label{sec:profiles}

Choosing a jump family does not fully specify the environmental coupling. A
second choice remains: how should the assigned coupling weight be distributed
within that family? We first keep the local-relaxation family
fixed and compare uniform, random unequal, learned static, and input-adaptive
rate profiles. All sites relax in every candidate, but they need not do so
equally. The profile-only unequal-local row in Table~\ref{tab:map} in
Appendix~\ref{app:shared} reports the common-protocol comparison. The dedicated STM
sweep below uses the separately
specified profile-optimization protocol.

The uniform profile assigns the same rate to every site. A random unequal
profile introduces a fixed distribution of local timescales. A learned static
profile is fitted separately to each reservoir using validation data. An
adaptive profile varies with the current input. The static profiles share the
same mean rate. The adaptive profile is mean-matched over a fixed input grid
but may vary with the input.

\Needspace{8\baselineskip}
Figure~\ref{fig:profiles}(a) shows that rate redistribution changes STM without
changing the local-relaxation family. Random unequal rates improve on the
uniform-local reference, and reservoir-specific learning improves further.
The learned-minus-uniform gain is \(1.34\) units, with an interval of
\([1.13,1.55]\) and \(32/32\) wins. The adaptive profile gives only a marginal
additional gain under the matched restart and evaluation limits. This ordering
does not establish a general superiority of static profiles because the static
and adaptive parameterizations differ in dimension and stopping behavior.

Within-family tuning redistributes the diagonal rates in the local family,
whereas in the collective family it changes the coefficient vector
\(\boldsymbol c\) and thereby rotates the supported rank-one Kossakowski
direction. Both are well-defined, reproducible tuning operations.

The jump-family and rate-profile axes are not competing alternatives. We therefore
cross them directly by comparing uniform and learned profiles within both
local and collective relaxation at the same \(\mathcal B\).
Figure~\ref{fig:profiles}(b) displays this \(2\times2\) comparison. The vertical
separation shows the jump-family effect, the uniform-to-learned slope shows the
profile effect, and the different slopes show that the profile benefit depends
on the selected family.

\begin{figure}[!t]
\centering
\includegraphics{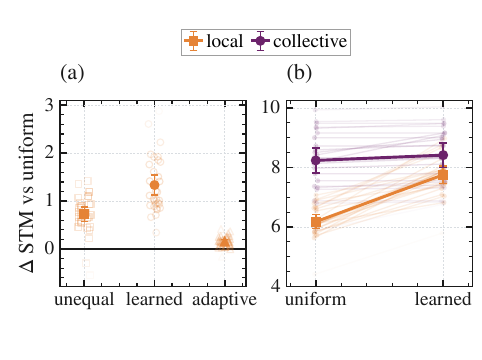}
\caption{\textbf{Rate profiles tune memory within a relaxation family.}
\emph{(a)} STM change relative to uniform local rates. Positive values favor
the alternative profile. \emph{(b)} Learning raises STM for both families,
more strongly for local relaxation, while collective relaxation remains higher
overall. Pale marks and lines show reservoirs. Large symbols and bars show
means and \(95\%\) intervals.}
\label{fig:profiles}
\end{figure}

The profile gain is larger for local relaxation. The difference between the
two gains is \(1.397\) STM units. All \(24/24\) paired values are positive, and
the familywise \(95\%\) interval is \([1.078,1.716]\). Collective relaxation
nevertheless remains higher after both profiles are learned. Profile
optimization refines the selected family by a family-dependent amount.
Appendix~\ref{app:experiment2} gives the parameterizations and inference scope.

\FloatBarrier
\subsection{Finite sampling determines experimental visibility}
\label{sec:measurement}

The preceding comparisons characterize the computation available to an exact
45-Pauli readout. An experiment observes finite measurement outcomes rather
than exact expectation values. We therefore ask a distinct question: at what
preparation budget do the differences created by coupling organization become
statistically visible?

\FiniteSamplePriorWork

At each step, independent sampling spends \(N_{\mathrm{prep}}=45q\) shots over
45 observables. Grouped sampling spends the same nominal budget in the global
\(X\), \(Y\), and \(Z\) bases and reuses each bitstring across 15 features.
Each design validates its ridge coefficient before one held-out evaluation.
``Preparations per step'' excludes replay of the preceding input history, so
Figure~\ref{fig:sampling} studies measurement resolution, not end-to-end runtime.

\begin{figure}[!t]
\centering
\includegraphics{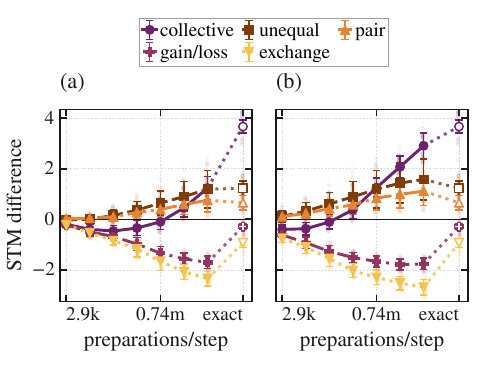}
\caption{\textbf{Grouped measurements reveal memory differences with fewer
preparations.}
STM difference from uniform local relaxation versus preparation budget.
Positive values favor the design. \emph{(a)} Independent and \emph{(b)}
grouped measurement. Grouping reveals the exact ordering at lower budget.
Open symbols mark exact results. Pale points show reservoir pairs. Large
symbols and bars show means and simultaneous \(95\%\) intervals.}
\label{fig:sampling}
\end{figure}

At the exact-expectation endpoint, Figure~\ref{fig:sampling} recovers the
collective-over-reference memory ordering from the jump-family comparison.
Grouped sampling resolves this contrast with one-sixteenth of the preparation
count needed by independent sampling. This gain is close to the 15-fold
feature reuse built into the estimator. It is a measurement advantage rather
than a hardware advantage specific to collective relaxation.

The finite-data curves also show why measurement design belongs in the
comparison. Sampling noise does not reduce every STM score by the same amount.
At the smallest independent-sampling budgets, pair loss appears best. Unequal
local relaxation leads at intermediate precision. Collective relaxation
emerges only after the observables are estimated accurately enough. Under
simultaneous inference across the complete comparison family, grouped sampling
resolves collective relaxation above every competitor only at the largest
finite budget. Independent sampling never resolves it as best on the tested
grid.

Appendix~\ref{app:experiment3} gives the estimator definitions, budget
parameterization, and simultaneous-inference scope.
\FloatBarrier

Across these analyses, the memory advantage of collective over uniform local
relaxation is the strongest recurring jump-family effect.
Section~\ref{sec:interpretation} now examines which dynamics are consistent
with that contrast.

%% file: sections/evaluation.tex
\section{\texorpdfstring{Dynamical interpretation of the\\
local and collective memory contrast}{Dynamical interpretation of the local
and collective memory contrast}}
\label{sec:interpretation}

Section~\ref{sec:performance} establishes the controlled computational ordering
independently of the interpretation below. Here we ask which dynamical picture
is consistent with it after the tested initial states have been forgotten, not
which microscopic ingredient is uniquely causal.

\subsection{Working physical picture}

The simplest distinction concerns how the reservoir reaches its environment.
Under uniform local relaxation, every spin has its own direct relaxation
channel. Under collective relaxation, the spins share one channel. The same
assigned coupling weight is therefore spread over several separate paths in
one case and concentrated into a common path in the other. Some combinations
of spin excitations then reach the collective channel only after the
Hamiltonian has mixed them with the directly coupled combination. They are not
permanently protected: coherent dynamics continue to move information between
these combinations. Appendix~\ref{app:shared} gives the corresponding
mathematical formulation and states its precise limits. The same interference
principle appears in collective emission and in network states that a shared
environment cannot address directly~\cite{Dicke1954,Cabot2018}.

This gives the working picture. A shared environmental channel directly
couples fewer combinations. Hamiltonian mixing provides indirect relaxation
routes for the others, so the resulting slower readout-visible responses can
preserve a longer accessible input history.

\subsection{Scalar controls for the memory contrast}

Fixing \(\mathcal B\) controls assigned operator weight, but it does not
equalize realized jump activity, the relaxation spectrum, or the operating
point preferred by each environmental-coupling design. We therefore ask whether any one of
these scalar differences is sufficient to explain the principal memory
contrast. These controls are not intended to factorize the generator into
uniquely causal structural ingredients. Figure~\ref{fig:collective-case}(a) summarizes the
fixed-\(\mathcal B\), activity-matched, gap-matched, and independently selected
comparisons. The STM contrast is positive in all four protocols.

The first control asks whether the shared channel wins simply because it
produces a different amount of realized jump activity. Local and collective
rates are calibrated separately to the same time-averaged expected activity
on an input stream without task labels, then frozen before task scoring. Every
calibration passes and the held-out activity difference is unresolved.
Collective relaxation still gains \(3.306\) STM units, with a \(95\%\)
interval of \([1.393,5.220]\) and \(8/8\) paired wins. Different mean jump
activity cannot by itself explain the memory contrast.
Appendix~\ref{app:activity-control} gives the calibration protocol.

The second control asks whether the shared channel remembers longer simply
because it relaxes more slowly. Before
any task stream is generated, each local rate is therefore calibrated to the
collective generator's Liouvillian gap at the constant input \(s=0.5\). This
gap is the slowest nonzero decay rate of that constant-input evolution and
provides one measure of relaxation speed. All 24 matches fall within the
prescribed tolerance.

Gap matching reduces the STM and NARMA-10 contrasts by \(62.1\%\) and
\(48.2\%\), respectively, showing that this timescale explains an important
part of the effect. It does not remove the ordering. Collective relaxation
retains \(1.424\) STM units with an interval of \([0.800,2.049]\) and \(20/24\)
wins, and the NARMA-10 difference remains resolved.
Figure~\ref{fig:scalar-controls}(b) in Appendix~\ref{app:experiment1}
resolves the same comparison by input delay. Matching the local rate restores
much of its memory tail, but
collective relaxation remains higher at long delays. The bands are pointwise
\(95\%\) intervals across the 24 paired reservoirs.
Appendix~\ref{app:gap-control} gives the calibration protocol.

Finally, a common setting could accidentally favor one jump family. Local and
collective relaxation therefore choose their own drive, input duration,
dissipative multiplier, and ridge coefficient, with disjoint data for
selection and final testing. Collective relaxation still leads by \(3.945\)
STM units, with an interval of \([3.523,4.368]\) and \(24/24\) wins. The
contrast does not require a shared operating point. The search is bounded
rather than global. Appendix~\ref{app:selection-control} gives the selection
protocol and bounded-search scope.

These controls rule out unequal mean activity, one representative decay rate,
and a common operating point as sole explanations of the principal ordering
between local and collective relaxation. They do not equalize the complete spectrum,
input-dependent propagator products, non-normal transients, energy flow,
entropy production, implementation effort, or hardware cost, and they were
not repeated for the full cross-family task map.

\subsection{Positive dynamical support}

\RelaxationTheoryPriorWork

Because STM is obtained by linear regression on the measured feature
trajectory, a dynamical component can contribute only if the input excites it
and it remains visible in the readout span. Fixed-input mode overlap tests
readout visibility. The switched-response kernel additionally includes input
excitation. The slower collective modes remain Pauli-visible, and after the
representative timescale is matched the switched response is concentrated into
fewer observable patterns.

Two additional checks connect this response to the coupling-organization
interpretation. Across six
process designs, slower midpoint relaxation usually accompanies more STM. A
separate sweep mixes local and collective loss continuously. Its selected
interior point improves on local relaxation in every fresh pair, demonstrating
robustness to residual local damping along this path. Together, the diagnostics
support one coherent account of the endpoint contrast.
Appendix~\ref{app:interpretation} reports their uncertainty.

\subsection{Scope of the explanation}

Along the fixed-weight interpolation, every interior one-body lowering block
retains Kossakowski rank \(N\) while STM changes, so rank alone is insufficient.
Learned profiles and the interpolation change several coordinates together.
Two matched rank-one interventions isolate Kossakowski orientation. The
primary \(N=5\) study finds more STM along the equal-phase end of a
prespecified path and against five zero-overlap controls. In a separately
generated \(N=6\) replication, rotating the equal-phase vector to the real,
sign-balanced direction \((1,1,1,-1,-1,-1)\), orthogonal to the uniform
site profile of the input drive, lowers STM by \(2.271\) units
\([1.826,2.717]\), with
\(24/24\) paired wins for equal phase and exact two-sided sign-test
\(p=1.19\times10^{-7}\). Kossakowski rank, nonzero spectrum and trace,
assigned operator weight, sitewise diagonal weights, coefficient magnitudes,
and every within-pair processor and data choice remain fixed
(Figure~\ref{fig:phase-direction} in Appendix~\ref{app:phase-direction}).
These invariants are
therefore insufficient to determine memory: Kossakowski orientation relative
to the fixed driven processor is an operative coupling-organization
coordinate. This
does not distinguish alignment with the input from alignment with the
coherent dynamics, or make equal phase or a response diagnostic universal.

Dephasing follows a different contraction mechanism. In the connected primary
model, it removes the measured signal after washout, which accounts for its
near-zero STM. Appendix~\ref{app:interpretation} gives the driven-dephasing
argument. It should not be folded into the shared-channel explanation.

%% file: sections/conclusion.tex
\section{Discussion and conclusion}
\label{sec:discussion}

Previous dissipative QRC studies largely fix the environmental process and tune
only its strength. We show that this one-rate picture is incomplete. In the
finite spin reservoirs studied here, whether qubit transitions couple to
separate or shared channels, and how coupling is distributed among them,
changes which parts of the input history the measured output can recover. A
reservoir's connection to the environment is therefore part of its
computational design.

The clearest recurring comparison is between uniform local and collective
relaxation. With uniform local relaxation, each qubit couples to its own decay
channel. With collective relaxation, the environment couples directly to one
collective combination of transitions while the Hamiltonian can mix other
combinations into it. Shared relaxation makes more recent input history
recoverable and improves a nonlinear benchmark that depends on several earlier
inputs.

The cleanest intervention changes only the collective combination addressed by
an otherwise matched shared channel. Memory changes even though the channels
have the same number of coupled directions, the same nonzero coupling
strengths, the same total and per-qubit coupling weights, and identical
processors, inputs, and readouts. The same direction sensitivity appears at
both tested sizes. These comparisons identify what coarse summaries of
dissipation miss: the collective transition directly coupled to the
environment. Dynamical diagnostics and a gradual path between uniform local and
collective relaxation support this channel-selectivity picture, but do not
establish a unique microscopic mechanism.

The ordering between collective and uniform local relaxation persists after longer washout, across the
tested sizes and Hamiltonians, with a different input encoding, and under
separate controls for average dissipative activity, a representative relaxation
rate, and operating-point selection. These checks make incomplete washout, one
processor or encoding, and the matched scalar differences insufficient
explanations. They establish a recurring contrast, not a universal advantage
over every environmental process.

Across memory, nonlinear processing, parity, and chaotic forecasting, different
dissipative designs favor different tasks. Redistributing a fixed coupling
budget within one design also changes memory. No process is best for every task.
Coupling organization is therefore a task-dependent design variable, not a
universal optimization rule.

Memory here means previous-input information that is linearly recoverable from
the specified Pauli measurements after washout, not an intrinsic quantum-memory
lifetime. The full task map concerns numerical five-qubit driven-spin
reservoirs. Recurrence from four to eight qubits does not establish asymptotic
scaling, and the four-qubit boundary is excluded from claims that the initial
state has been fully forgotten. Holding one common measure of coupling weight
fixed provides a controlled comparison, not dynamical, physical, or hardware
equivalence. The finite-measurement analysis includes projection noise and the
chosen estimators, but not drift, gate and readout errors, latency, or full
experimental cost. The analytical result on forgetting applies to driven
dephasing, not the shared-channel effect, and a generic common bath need not
realize the modeled collective process.

The result brings a familiar principle of dissipation engineering into temporal
quantum computation: designing the interaction between the system and the
environment~\cite{Poyatos1996,Verstraete2009,Pastawski2011,Harrington2022}. Shared
resonators, waveguides, and structured couplers suggest possible routes for
testing this coupling organization
~\cite{Cattaneo2019,Brehm2021,Sheremet2023,CattaneoPRX2023}. The next challenge
is to predict useful coupling patterns jointly from the Hamiltonian, input
encoding, environment, and measurement, then compare realizable couplers under
meaningful physical budgets in larger, noisy reservoirs. How coupling
organization interacts with non-Markovian memory is a separate open question.

Together, these results move dissipation in quantum reservoir computing from a
parameter to be tuned to a connection to be designed. The question is not only
how quickly information should decay, but which transitions should share a
route to the environment so that task-relevant history remains readable. For
temporal quantum processing, the way a reservoir forgets is part of how it
computes.

%% file: sections/methodology.tex
\appendix
\raggedbottom

\section{Reservoir model, dissipator construction, and shared methods}
\label{app:shared}
\label{sec:methods}

This appendix gives the shared definitions and numerical conventions, then
the additional controls and interpretation behind each analysis.

\subsection{Data and code availability}

All code, data, and analysis workflows for this work, together with the
manuscript source and figure inputs, are available at
\url{https://github.com/eybmits/qrc-dissipation-engineering}. The repository
documents the full protocol, including random seeds, Hamiltonian and input
definitions, data splits, parameter grids, solver and optimization settings,
and numerical tolerances. It also contains the calibration records,
convergence checks, and per-run outputs behind each reported number, so every
result in the paper can be inspected and reproduced.

\subsection{Collective process, pairing, and readout}
\label{app:collective-process}

The principal collective process uses one fixed vector in the main,
finite-size, Hamiltonian, activity-matched, and gap-matched comparisons:
\[
  c_i=1\quad(i=1,\ldots,N),\qquad
  \boldsymbol c=(1,\ldots,1),\qquad
  L_{\mathrm c}=\sqrt{\gamma}\sum_{i=1}^{N}\sigma_i^- .
\]
Thus the coefficients are real, have equal phase, and satisfy
\(\sum_i|c_i|^2=N\).  Only the separately identified profile-by-family check
fits a nonuniform collective profile.  The common target is the unit-rate
local value \(\mathcal B=N\,2^{N-1}\), giving \(\mathcal B=80\) at \(N=5\).
The finite-size study rematches this target separately at every \(N\).

The jump-family comparison uses independently drawn complete-graph couplings
\(J_{ij}\sim U[-1,1]\).  Unless a control explicitly changes an operating
point, \(h=\Delta t=0.5\), the reservoir starts in
\(\lvert0\cdots0\rangle\), and the STM and NARMA-10 sequences contain 200
washout, 600 training, and 400 test inputs. The task-specific parity and
MG-150 protocols are stated below. Comparisons are paired: for a given seed,
all dissipative designs see the same Hamiltonian, input sequence, targets,
and data split.  Variation across seeds therefore reflects both a new
reservoir and a new input sequence.

At \(N=5\), the readout contains the expectations of all one-qubit Pauli
operators and all same-axis two-qubit Pauli products.  These 45 measured
features, together with a bias, feed a linear classical readout.  The internal
quantum dynamics are not trained.  Continuous-input evolution is computed by
sparse matrix exponentiation.  The dense-grid approximation used only for
the bounded local-profile search is described in
Appendix~\ref{app:experiment2}.

\subsection{Tasks and what they probe}

Short-term memory (STM) asks whether a past input can be reconstructed from
the present reservoir state. For delay \(\tau\), let \(C_\tau\) denote its
squared-correlation contribution. The reported total is
\(C_{\mathrm{STM}}=\sum_{\tau=1}^{20}C_\tau\). Larger values mean that more of
the recent input history remains linearly accessible.

NARMA-10 adds nonlinear temporal processing.  Its target obeys
\[
\begin{aligned}
y_k={}&0.3y_{k-1}+0.05y_{k-1}\sum_{j=1}^{10}y_{k-j}\\
&+1.5\tilde s_{k-10}\tilde s_{k-1}+0.1,
\qquad \tilde s=0.2s ,
\end{aligned}
\]
and is scored by normalized mean-squared error (NMSE), so smaller is better.
Parity uses binary input windows of length \(1,\ldots,7\) and 15 virtual
nodes per input interval.  It has separate 500/250/400
train/validation/test streams after 100 washout inputs.  Validation selects
the readout regularization before a train-plus-validation refit.

The forecasting target obeys the delay equation introduced by Mackey and Glass
\[
\dot x(t)=\frac{0.2x(t-17)}{1+x(t-17)^{10}}-0.1x(t).
\]
It is integrated by Euler steps of length one from the seeded delay history
\(x_t=1.2+0.01\xi_t\), with
independent and identically distributed (i.i.d.) noise
\(\xi_t\sim\mathcal N(0,1)\). Every third Euler
iterate is retained, and the first 500 retained samples are discarded. After a
200-sample washout and 600-sample teacher-forced training prefix, the readout
is closed for 150 steps.  The affine scale
is fixed by the washout-plus-training prefix.  Only closed-loop predictions
fed back to the reservoir are clipped to \([0,1]\), while the held-out target
remains unclipped.

For Table~\ref{tab:map}, each continuously driven design and the external
reset-FN baseline uses 32 reservoirs for STM and NARMA-10, 16 for parity, and
64 for MG-150. The no-dissipation diagnostic uses 30, 30, 32, and 32
reservoirs, respectively.

\begin{table}[!t]
\centering
\caption{\textbf{Absolute scores for the fixed-\(\mathcal B\),
common-protocol comparison.}
Entries are mean \(\pm\) standard error. Arrows give the favorable direction,
and bold marks the best mean. Row groups distinguish external or diagnostic
comparisons, fixed-profile jump-family models, and a profile-only variation
within local relaxation. Task definitions and sample sizes are given in this
subsection.}
\label{tab:map}
\scriptsize
\setlength{\tabcolsep}{4pt}
\renewcommand{\arraystretch}{0.96}
\resizebox{\linewidth}{!}{%
\begin{tabular}{@{}lcccc@{}}
\toprule
model & STM \(\uparrow\) & NARMA-10 \(\downarrow\)
 & parity \(\uparrow\) & MG-150 \(\downarrow\)\\
\midrule
\multicolumn{5}{@{}l}{\emph{External and diagnostic comparisons}}\\
no dissipation
 & \(0.047\pm0.004\) & \(1.308\pm0.036\)
 & \(0.016\pm0.001\) & \(0.0697\pm0.0016\)\\
reset-encoded FN~\cite{Fujii2017}
 & \(10.106\pm0.260\) & \(0.367\pm0.012\)
 & \(2.295\pm0.125\) & \(0.115\pm0.008\)\\
\addlinespace[2pt]
\multicolumn{5}{@{}l}{\emph{Fixed-profile jump-family models}}\\
uniform local relaxation~\cite{Sannia2024}
 & \(8.634\pm0.069\) & \(0.315\pm0.008\)
 & \(4.822\pm0.038\) & \(0.0369\pm0.0029\)\\
collective relaxation
 & \(\mathbf{12.206\pm0.101}\) & \(\mathbf{0.229\pm0.006}\)
 & \(3.782\pm0.028\) & \(0.0969\pm0.0091\)\\
pair loss
 & \(9.203\pm0.082\) & \(0.289\pm0.007\)
 & \(4.752\pm0.037\) & \(0.0387\pm0.0035\)\\
local gain/loss
 & \(8.307\pm0.058\) & \(0.337\pm0.008\)
 & \(4.852\pm0.044\) & \(0.0462\pm0.0038\)\\
exchange-assisted relaxation
 & \(7.413\pm0.045\) & \(0.441\pm0.009\)
 & \(\mathbf{5.083\pm0.049}\) & \(0.0518\pm0.0038\)\\
dephasing
 & \(0.000\pm0.000\) & \(1.020\pm0.006\)
 & \(0.000\pm0.000\) & \(0.0599\pm0.0004\)\\
\addlinespace[2pt]
\multicolumn{5}{@{}l}{\emph{Profile-only variation within local relaxation}}\\
unequal local relaxation
 & \(9.845\pm0.119\) & \(0.266\pm0.007\)
 & \(4.366\pm0.071\) & \(\mathbf{0.0349\pm0.0028}\)\\
\bottomrule
\end{tabular}%
}
\end{table}

\subsection{Constructing and normalizing dissipative processes}

Table~\ref{tab:designs} gives the physical action of the six process families
and the unequal-rate variant.  Unequal local rates are drawn log-uniformly
from \([0.1,10]\) and rescaled to unit mean.  The collective coefficients are
the fixed equal-phase choice \(c_i=1\) stated in
Appendix~\ref{app:collective-process}.  Pair loss contains every \(i<j\).
Exchange-assisted relaxation combines local lowering with ordered,
excitation-conserving exchange jumps, assigning 40\% of \(\mathcal B\) to
the former and 60\% to the latter.  The local gain/loss process uses
downward/upward rates \(1/0.3\) before the common rescaling.  After
construction, one scalar rescales every completed design to the same
\(\mathcal B=\sum_k\Tr(L_k^\dagger L_k)\) as unit-rate uniform local
relaxation.

Section~\ref{sec:model} defines the fixed operator basis and Kossakowski matrix
used to compare generators independently of jump-list remixing. For the
one-body lowering families, write
\[
L_k=\sum_i A_{ki}\sigma_i^-,
\qquad
\Gamma=A^\dagger A\succeq0 .
\]
Then
\[
\mathcal B=2^{N-1}\Tr\Gamma .
\]
Thus fixed \(\mathcal B\) fixes \(\Tr\Gamma\). Uniform local relaxation has
rank \(N\), whereas the equal-phase collective channel has rank one. Kernel
directions are Kossakowski coupling directions, not decay modes of the full
Liouvillian. Hamiltonian mixing can provide indirect relaxation. The identity
above applies only to the one-body lowering block. Across unlike operator
sectors, fixed \(\mathcal B\) remains a structural control rather than an
equivalence of spectra, flows, or implementation cost.

\subsection{Statistical reporting}

All primary effects are computed as paired, instance-level differences and
oriented so that positive means improvement.  Unless stated otherwise, we
report the mean difference, a two-sided 95\% Student-\(t\) interval, and the
paired win count. For the non-reference cell comparisons, paired sign-flip
tests are corrected over the comparison family
with the Holm procedure~\cite{Holm1979}. Finite-size intervals use paired
bootstrap resampling. The profile and finite-sampling experiments use
simultaneous intervals because many related contrasts are examined together.

\FloatBarrier
\section{Jump-family computation and controls of the main memory contrast}
\label{app:experiment1}

The jump-family analysis asks whether changing the jump family changes
computation when the Hamiltonian, input, operating point, readout, and
\(\mathcal B\) are held fixed.  The broad map establishes the task dependence
of that jump-family choice.  The subsequent checks focus on the principal
collective-versus-local memory contrast.

\subsection{Fixed-\texorpdfstring{\(\mathcal B\)}{B} jump-family task map}

Figure~\ref{fig:map} and Table~\ref{tab:map} report the absolute
fixed-\(\mathcal B\) scores.

A validation-selected readout control reuses the same STM and NARMA-10
trajectories but divides the 600 fitting rows into 450 training and 150
validation rows.  Each design and instance selects ordinary least squares or
a ridge value before refitting on all 600 rows.  The ordering is unchanged:
the collective design remains best on both tasks.  Thus the main ranking is
not an artifact of imposing one readout penalty on every process.

\subsection{Robustness to the coherent dynamics}

The principal contrast was repeated in four paired \(N=5\) spin-Hamiltonian
ensembles: the primary \(XX+Z\) model with an \(X\) drive, a \(ZZ+X\) model
with a \(Z\) drive, an \(XY+Z\) model with an \(X\) drive, and an \(XX\) ring
with an \(X\) drive. The encoding and Pauli readout remain fixed.

\subsection{Replication under reset-based input encoding}
\label{app:reset-architecture}

To test whether this ordering depends on continuous-drive encoding, we repeat
the comparison in a reset-encoded reservoir. At input step \(t\), a designated
input qubit, labeled 0, is replaced by
\[
 \rho_t^{+}
 = \lvert\psi(s_t)\rangle\!\langle\psi(s_t)\rvert_0
   \otimes \Tr_0(\rho_t),
 \qquad
 \lvert\psi(s)\rangle
 = \sqrt{1-s}\lvert0\rangle+\sqrt{s}\lvert1\rangle ,
\]
after which the autonomous \(XX+Z\) Hamiltonian evolves together with either
uniform local relaxation or the equal-phase collective channel. This changes
the input architecture: the reset is non-unitary and the input-dependent
transverse drive is absent during the evolution interval.

The strict comparison otherwise retains the principal \(N=5\) construction:
complete-graph couplings independently drawn from \(U[-1,1]\),
\(h=\Delta t=0.5\), \(\gamma=1\), \(\mathcal B=80\), i.i.d. inputs
\(s_t\sim U[0,1]\), the 45 Pauli features and bias, ridge \(10^{-8}\), STM
delays \(1,\ldots,20\), the same NARMA-10 target, and 600 training and 400
untouched test inputs. Within each of 16 fresh pairs, the local and collective
reservoirs share the Hamiltonian, input sequence, target, and data split. We
use an 800-input washout because
collective reset dynamics converge more slowly than their local counterpart.
Table~\ref{tab:reset-architecture} reports the strict task endpoints, and
Figure~\ref{fig:reset-architecture} shows their paired effects and
lag-resolved STM profile.

\begin{table}[H]
\centering
\caption{\textbf{Strict reset-encoded replication.} The effect is
collective-minus-local for STM and local-minus-collective for NARMA-10, so
positive values are favorable in both rows. Intervals are paired \(95\%\)
confidence intervals.}
\label{tab:reset-architecture}
\footnotesize
\setlength{\tabcolsep}{3.5pt}
\renewcommand{\arraystretch}{1.08}
\begin{tabular}{@{}lcccc@{}}
\toprule
task & local & collective & favorable effect & wins\\
\midrule
STM capacity
 & \(4.596\) & \(6.369\)
 & \(1.773\,[1.312,2.235]\) & \(16/16\)\\
NARMA-10 NMSE
 & \(0.673\) & \(0.491\)
 & \(0.182\,[0.144,0.220]\) & \(16/16\)\\
\bottomrule
\end{tabular}
\end{table}

\begin{figure}[H]
\centering
\includegraphics{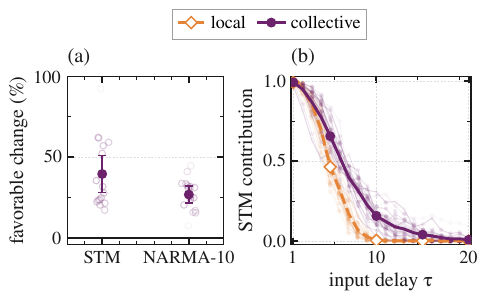}
\caption{\textbf{Collective relaxation also improves memory with reset-based
input encoding.}
\emph{(a)} Improvement over local relaxation for STM and NARMA-10 across 16
paired reservoirs. Positive values mean higher STM or lower NARMA-10 error.
\emph{(b)} Memory retained from each previous input after the strict 800-input
washout. The slower purple decline shows that collective relaxation keeps
older inputs accessible for longer. Pale marks and lines show individual
reservoirs. Large symbols, thick lines, bars, and bands show means and
\(95\%\) intervals. All comparisons use \(\gamma=1\) and \(\mathcal B=80\).}
\label{fig:reset-architecture}
\end{figure}

Repeating both tasks and both dissipative models from the ground, fully
excited, maximally mixed, and Haar-random pure states gives a maximum score
range below \(7\times10^{-7}\). The largest trace distance at the 800-input
washout is \(1.3\times10^{-14}\), at numerical precision. The strict ordering
therefore reflects retained input history rather than the tested starting
state. It establishes that the effect transfers across the two tested input
architectures, not universal architecture independence. Nor does this
comparison imply that dissipation outperforms a reset-only unitary reservoir.
It instead isolates how two matched environmental-coupling designs shape usable memory.

\subsection{Finite size and loss of the initial state}

The finite-size study uses 24 fresh paired lineages at every
\(N=4,\ldots,8\).  Nested principal submatrices of an \(N=8\) coupling draw
are rescaled by \(\sqrt{4/(N-1)}\), and the Frobenius reference is rematched
within each size.  The number of measured Pauli features, excluding the fitted
bias, grows as
\[
F(N)=3N+3\binom{N}{2}.
\]
The sweep is intended as a finite-size robustness test of this paired ordering,
not as an analysis of asymptotic memory,
mixing-time, or Liouvillian scaling. STM is summed over a fixed 20-delay
horizon and therefore satisfies \(C_{\mathrm{STM}}\leq 20\) for every \(N\),
whereas \(F(N)\) grows with \(N\). Consequently,
\[
  \frac{C_{\mathrm{STM}}}{F(N)}
  \leq \frac{20}{3N+3\binom{N}{2}}.
\]
Even perfect recall at all tested delays would therefore have a decreasing
STM-per-feature ceiling. This ratio is not a neutral scaling or
resource-efficiency metric under the present protocol. The grouped readout
still requires only the three global \(X\), \(Y\), and \(Z\) settings, while
the fitted linear readout has \(F(N)+1\) inputs after including the bias.
Figure~\ref{fig:collective-case}(b,c) shows the absolute scores across size.

Because collective coupling can leave some combinations only indirectly
damped, we separately tested whether the reported memory could be residual
dependence on the starting state.  For each process and size, eight fresh
lineages at \(N=4,5,6\) were driven by identical switched input sequences from
the ground, fully excited, maximally mixed, and a Haar-random pure state.
Every checkpoint stores the largest trace and Pauli-feature distances over
the six initial-state pairs. Local and pair processes converge rapidly. A
continuation follows all 48 local and pair lineages at
\(N=4,5,6\) to 1200 inputs. Their worst trace distance from step 1100 to 1200
is \(6.85\times10^{-15}\).
\Needspace{4\baselineskip}
For collective relaxation at the principal \(N=5\)
setting, the worst trace distance falls from \(4.78\times10^{-4}\) after 200
inputs to \(1.32\times10^{-11}\) after 800. A separate continuation
follows all eight lineages to 1200 inputs and reaches
\(4.42\times10^{-15}\), below its declared \(10^{-14}\) gate. The slowly
forgetting collective \(N=4\) boundary case is excluded from
initialization-independent claims.

A stricter STM check uses ten principal-pair lineages, three distant initial
states, and washouts of 50, 200, and 800 inputs. A separate continuous-drive
NARMA-10 check uses all 32 principal pairs. For the condition with an
800-input washout, a frozen
600-input prefix is placed before the unchanged canonical 1200-input sequence.
This construction leaves the final 200 washout inputs and the 600 training and
400 test rows identical to the condition with a 200-input washout. Local and
collective reservoirs share every input, target,
split, Hamiltonian, and readout choice within a pair. Ground-state scoring uses
all 32 pairs, and the first eight repeat both washouts from the ground, fully
excited, maximally mixed, and Haar-random states. The scores after the 200-input washout
must reproduce the sealed principal values before the aggregate after the
800-input washout is accepted. At the 800-input washout,
local and collective NMSE are \(0.315\) and \(0.230\), respectively. The
local-minus-collective difference is \(0.0851\,[0.0716,0.0986]\), favorable in
all 32 pairs. Its change from the 200-input washout is
\(-0.00007\,[-0.00057,0.00043]\). The largest four-state collective score spread
after the 800-input washout is \(2.18\times10^{-5}\), below the smallest paired ground-state
improvement, \(0.0148\). The archive records every checkpoint and the
serialization-only amendment made before any non-audit score was generated.

\subsection{Matched expected jump activity}
\label{app:activity-control}

The fixed-\(\mathcal B\) comparison deliberately allows the realized jump
activity to differ.  To test whether this alone explains the memory result,
we calibrate the two processes using the time-averaged expectation
\[
\mathcal J=\frac{1}{T\Delta t}\sum_{t=1}^{T}\int_0^{\Delta t}
\sum_k\Tr\!\left[L_k^\dagger L_k\rho_t(\tau)\right]\,\mathrm d\tau .
\]
Local and collective rates are fitted separately to
\(\mathcal J_\star=0.5075\) on a shared, unlabeled input stream that is
independent of every task stream.  After 100 washout inputs, activity is
averaged over 150 intervals.  All 16 calibrations for eight fresh reservoir
pairs pass the predeclared 1.5\% tolerance before task scoring.

\subsection{Matched midpoint Liouvillian gap}
\label{app:gap-control}

A second control asks whether the result is only a consequence of the slowest
relaxation time at a representative constant input.  For each of 24 fresh
\(N=5\) Hamiltonians, the local rate is calibrated to the collective
Liouvillian gap at \(s=0.5\) before task inputs are generated.  All
calibrations satisfy the 0.5\% relative-error gate.

\subsection{Independent operating-point selection}
\label{app:selection-control}

The third control lets local and collective relaxation select their own
operating points without using test data.  The bounded search varies the drive,
input duration, dissipative strength, and readout penalty over predeclared
grids.  A two-reservoir prescreen
advances eight settings per jump family.
Twelve disjoint reservoirs then select one setting and readout penalty per
process.  The final 24 reservoirs are opened only after these choices are
written.  The collective strength choice is additionally bracketed on the
selection set before the final test.

\begin{figure}[H]
\centering
\includegraphics{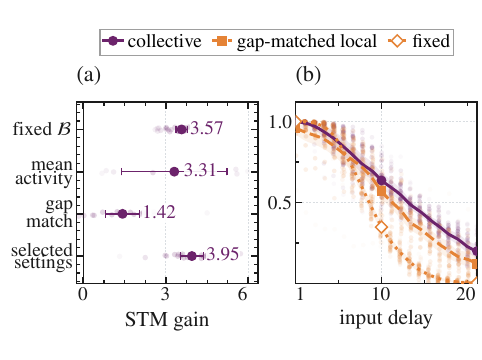}
\caption{\textbf{Targeted controls preserve the collective memory advantage.}
\emph{(a)} Collective-minus-local STM in four separate comparisons: fixed
assigned weight, matched mean expected jump activity, matched midpoint
Liouvillian gap, and bounded independent operating-point selection. Positive
values favor collective relaxation. \emph{(b)} Delay-resolved memory for
collective relaxation, the original local reference, and local relaxation after
its midpoint Liouvillian gap is matched to the collective channel at \(s=0.5\).
Collective remains higher after gap matching. Pale marks show paired
reservoirs. Large symbols, lines, bars, and bands show means and \(95\%\)
intervals.}
\label{fig:scalar-controls}
\end{figure}

\subsection{Validation-selected scalar-strength sensitivity}
\label{app:strength-sensitivity}

The fixed-\(\mathcal B\) family map is a controlled jump-family slice rather
than a globally optimized ranking. In a complementary post hoc STM sensitivity
analysis, scalar strength is selected separately for six non-dephasing process
designs by leave-one-reservoir-out validation. All designs use the base
multiplier grid \(\{0.05,0.1,0.25,0.5,1,2,4\}\). Uniform local and collective
relaxation also include \(8\) and \(16\), which bracket the collective optimum
at \(8\). The selected multipliers are \(8\) for collective, \(0.1\) for local
gain/loss and exchange-assisted relaxation, and \(0.25\) for uniform local,
unequal local, and pair loss. Every curve is bracketed. Collective relaxation
has the largest mean across the 20 held-out scores, and 50,000 selection-aware
bootstrap draws give Bonferroni-simultaneous 95\% percentile intervals
whose lower endpoints are positive for its advantage over every competitor.
This supports the principal STM lead across the tested strength grid, but it
does not constitute a fully disjoint, task-wise optimized family ranking or a
four-task Pareto analysis.

\FloatBarrier
\section{Rate-profile optimization and comparison}
\label{app:experiment2}

The rate-profile analysis keeps the jump family fixed and changes only how
local relaxation strength is distributed across sites.  It therefore tests
the second coordinate of environmental-coupling organization independently of the jump-family
comparison.

\subsection{Within-family profile sweep}

The profile sweep uses \(N=5\), mean local rate 1.5,
\((h,\Delta t)=(0.5,0.5)\), and a shorter protocol tailored to profile
optimization.  It uses 100 washout inputs, 100 optimization-training inputs,
150 validation inputs, and 200 test inputs.  All static profiles have the
same arithmetic mean.  We compare a uniform reference with a random unequal
profile and a static profile learned on validation data.  A fourth model uses
the input-adaptive profile
\(\gamma_i(s)=\operatorname{softplus}(a_i s+b_i)\).  The adaptive profile is
matched only in mean over the frozen input grid, so its instantaneous
Frobenius strength may vary.

Static and adaptive profiles are optimized from seeded starts with fixed
evaluation limits and stopping rules.

\subsection{Does profile learning act the same way in every family?}

A separate \(2\times2\) check crosses local versus collective relaxation with
uniform versus learned profiles. Figure~\ref{fig:profiles}(b) displays the 24
paired trajectories and their mean changes.  Every candidate is normalized to
the same \(\mathcal B\), and exact continuous-input evolution replaces the dense
approximation.  For local relaxation, learning redistributes diagonal
sitewise rates.  For collective relaxation, it changes the coefficient vector
\(\boldsymbol c\) and thereby moves the supported one-dimensional
Kossakowski direction while preserving rank one.  The axes are therefore
operational and interacting design coordinates rather than a globally
orthogonal factorization. Inference uses simultaneous coverage over the eleven
declared contrasts.

\FloatBarrier
\section{Finite-sampling estimators and inference}
\label{app:experiment3}

The finite-sampling analysis asks whether the exact-expectation ordering remains
visible when Pauli features must be estimated from a finite number of
preparations.  It compares two estimators at the same nominal preparation
count, rather than treating shot noise as independent additive noise on every
feature.

\subsection{Two estimators at equal preparation count}

At \(N=5\), write \(N_{\mathrm{prep}}=45q\).  The independent estimator
spends \(q\) shots on each of the 45 observables.  The grouped estimator
instead spends \(15q\) shots in each global \(X\), \(Y\), and \(Z\) product
basis and reuses each measured bit string for all compatible one- and
two-qubit features.  Table~\ref{tab:sampling-estimators} emphasizes the
methodological difference without repeating the budget grid.

\begin{table}[H]
\centering
\caption{\textbf{Finite-sampling estimators at equal nominal preparation
count.} Grouping changes how measurement records are reused and therefore
preserves correlations between compatible Pauli estimates.}
\label{tab:sampling-estimators}
\footnotesize
\setlength{\tabcolsep}{6pt}
\renewcommand{\arraystretch}{1.12}
\begin{tabular}{@{}p{0.25\textwidth}p{0.33\textwidth}p{0.33\textwidth}@{}}
\toprule
 & independent observables & grouped Pauli settings\\
\midrule
recorded data
& one binomial estimate per feature
& full bit strings in \(X\), \(Y\), and \(Z\) bases\\
record reuse
& none across features
& one record informs all compatible features\\
covariance
& omitted by construction
& retained within each setting\\
modeled scope
& projection noise
& projection noise with grouping, without hardware noise\\
\bottomrule
\end{tabular}
\end{table}

\subsection{Training and simultaneous inference}

Six non-dephasing designs are evaluated on 20 paired reservoir and
measurement instances under both estimators.  Validation selects the readout
penalty from the predeclared grid before a
train-plus-validation refit.  We test \(q=64\times4^j\) for \(j=0,\ldots,6\),
so both estimators receive \(45q\) preparations per step.
Finite-budget intervals are simultaneous over every declared design-pair,
estimator, and budget contrast.  The exact-expectation endpoint is displayed
separately.

The curves in Figure~\ref{fig:sampling}(a,b) should therefore be read as a
resolution study. A design is distinguished from local relaxation only when
the simultaneous interval excludes zero, not merely when its mean is larger.
The model does not include acquisition latency, device drift, or gate and
readout errors.

\FloatBarrier
\section{Dynamical interpretation: support and limits}
\label{app:interpretation}

The following diagnostics test whether the observed response is consistent
with the shared-channel interpretation without claiming a unique microscopic
cause.

\subsection{Slow dynamics that remain visible to the readout}

For eight paired \(N=5\) local and collective lineages, the slowest right modes at five
constant inputs are projected onto the 45-Pauli readout span.  Collective
relaxation retains slower modes with larger overlap.  The paired differences
are \(0.338\,[0.239,0.436]\) in decay rate and
\(0.104\,[0.028,0.180]\) in readout-weighted retention. Because this omits
excitation weights and switched propagator products, it supports but does not
uniquely identify the shared-channel interpretation.

\subsection{Concentrated response under switched input}

Starting from synchronized trajectories, a complementary diagnostic applies
small input perturbations and follows the 45-feature response over 20 lags.
Collective relaxation concentrates this response in fewer observable
patterns than the matched local process.  Its effective rank is lower by
\(1.270\,[0.959,1.580]\), while the leading-pattern energy fraction is higher
by \(0.285\,[0.243,0.327]\).  The switched-input result supports the same
picture.  It is an empirical response kernel rather than a complete modal
decomposition.

\subsection{A gradual path from local to collective coupling}

To determine whether the endpoint contrast emerges along a continuous
coupling-organization path, we interpolate the dissipators at fixed Frobenius weight as
\(\D_\alpha=(1-\alpha)\D_{\mathrm{local}}
+\alpha\D_{\mathrm{collective}}\), with \(0\le\alpha\le1\).
Interior points are ranked on separate gap-only reservoirs before fresh task
streams are opened.  Increasing the collective component slows midpoint
relaxation and raises STM at \(N=4\) and \(5\).  The selected interior point
beats local relaxation in every fresh pair.  This links the endpoint result
to a continuous within-family deformation, not to a general selector for
unrelated processes.

\subsection{Why driven dephasing forgets uniformly}

The near-zero STM of the dephasing control admits a direct contraction
argument under the connectivity assumptions of the primary model.

\begin{proposition}[Uniform contraction of the driven dephasing model]
\label{prop:dephasing}
Let \(L_i=\sqrt{\gamma_i}\,\sigma_i^z\) with every \(\gamma_i>0\), let
\(\Phi_s=\exp(\Lind_s\Delta t)\) with \(\Delta t>0\), and suppose the nonzero
single-spin-flip matrix elements of \(H(s)\) connect the computational-basis
hypercube for every \(s\in[0,1]\). Then there is a \(\kappa<1\) such that
\[
\sup_{s\in[0,1]}
\left\|\Phi_s|_{\mathrm{Tr}=0}\right\|_{2\rightarrow2}\le\kappa .
\]
Here \(\|X\|_2=(\Tr X^\dagger X)^{1/2}\).  Hence
\(\|\rho_n-I/2^N\|_2\le\kappa^n\|\rho_0-I/2^N\|_2\) for any switched input
sequence and initial state.
\end{proposition}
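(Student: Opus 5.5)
The plan is a Hilbert--Schmidt energy argument in three steps: pointwise contraction for each fixed \(s\), strictness from the connectivity hypothesis, and uniformity in \(s\) by compactness.

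\textbf{Step 1: the flow is nonexpansive.} Work in the Hilbert--Schmidt inner product \(\langle X,Y\rangle=\Tr(X^\dagger Y)\) and split \(\Lind_s=\mathcal H_s+\D\), where \(\mathcal H_s(X)=-i[H(s),X]\) and \(\D=\sum_i\gamma_i\D[\sigma_i^z]\).
\begin{itemize}
\item Since \(H(s)\) is Hermitian, \(\mathcal H_s\) is anti-self-adjoint.
\item Since \(\sigma_i^z\) is Hermitian with \((\sigma_i^z)^2=I\), we have \(\D[\sigma_i^z](X)=\sigma_i^zX\sigma_i^z-X\). This map is self-adjoint and negative semidefinite.
\item In the computational basis \(\D\) acts diagonally on matrix units: \(\D(\lvert a\rangle\langle b\rvert)=-2\sum_{i:\,a_i\neq b_i}\gamma_i\,\lvert a\rangle\langle b\rvert\). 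Because every \(\gamma_i>0\), \(\ker\D\) is exactly the set of diagonal matrices.
\end{itemize}
For \(X(t)=e^{\Lind_s t}X\) this gives
\[
\tfrac{d}{dt}\|X(t)\|_2^2=2\,\mathrm{Re}\langle X(t),\Lind_s X(t)\rangle=2\langle X(t),\D X(t)\rangle\le0 .
\]
Hence \(\|\Phi_s\|_{2\to2}\le1\).

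\textbf{Step 2: invariance and strictness for fixed \(s\).} The generator is unital and trace preserving, so \(\Phi_s(I)=I\) and \(\Phi_s\) maps the traceless subspace \(V\) into itself. Suppose \(X\in V\) and \(\|\Phi_sX\|_2=\|X\|_2\).
\begin{itemize}
\item Then \(\int_0^{\Delta t}\langle X(t),-\D X(t)\rangle\,dt=0\). The integrand is continuous and nonnegative, so it vanishes for every \(t\in[0,\Delta t]\).
\item By negative semidefiniteness of \(\D\), this means \(\D X(t)=0\), so \(X(t)\) is diagonal for all \(t\).
\item Therefore \(\dot X(t)=-i[H(s),X(t)]\) on \([0,\Delta t]\), and \(\dot X(t)\) must also be diagonal.
\item If \(a,b\) differ by one spin flip, the off-diagonal entry is \([H,X]_{ab}=H_{ab}(x_b-x_a)\), where \(x_a\) are the diagonal entries of \(X(0)\). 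Wherever \(H_{ab}\neq0\) this forces \(x_a=x_b\).
\item By hypothesis these flip edges connect the hypercube, so \(X(0)\propto I\). Tracelessness then gives \(X=0\).
\end{itemize}
So \(\|\Phi_sX\|_2<\|X\|_2\) for every nonzero \(X\in V\). Since \(V\) is finite dimensional, the unit sphere of \(V\) is compact and the supremum defining the norm is attained. Hence \(n(s):=\|\Phi_s|_V\|_{2\to2}<1\) for every \(s\).

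\textbf{Step 3: uniformity in \(s\), the main obstacle.} Per-\(s\) strictness alone does not give a uniform \(\kappa\). I would use that \(H(s)\) is affine in \(s\), so \(s\mapsto\Phi_s\) is continuous in operator norm, and hence \(s\mapsto n(s)\) is continuous. A continuous function on the compact set \([0,1]\) attains its maximum, and that maximum \(\kappa\) is strictly below \(1\) by Step 2.

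The one thing to verify is that the connectivity hypothesis is used at each \(s\) separately, including \(s=0\). This is exactly what the proposition assumes, so no edge can disappear at an endpoint of the interval.

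\textbf{Conclusion.} For \(\rho_n=\Phi_{s_n}\cdots\Phi_{s_1}\rho_0\), each difference \(\rho_k-I/2^N\) is traceless. Unitality gives \(\rho_k-I/2^N=\Phi_{s_k}(\rho_{k-1}-I/2^N)\). Applying the uniform bound \(n\) times yields \(\|\rho_n-I/2^N\|_2\le\kappa^n\|\rho_0-I/2^N\|_2\) for any switched input sequence and initial state.
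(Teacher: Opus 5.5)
Your proof is correct and follows essentially the same route as the paper. Both use the Hilbert--Schmidt dissipation identity and then the equality case: $X_t$ must stay diagonal with a diagonal derivative, so hypercube connectivity makes $X$ scalar and hence zero on the traceless subspace, and attainment plus continuity in $s$ on the compact interval $[0,1]$ gives a uniform $\kappa$. Your version additionally spells out details the paper leaves implicit, such as the matrix-unit action of $\D$ and the entry formula $[H,X]_{ab}=H_{ab}(x_b-x_a)$.
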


\emph{Proof.}
For \(X_t=\exp(t\Lind_s)X\), the Hamiltonian commutator is skew-adjoint in the
inner product \(\Tr(X^\dagger Y)\), while the Hermitian-unitary jumps give
\(\frac{\mathrm d}{\mathrm dt}\|X_t\|_2^2
=-\sum_i\gamma_i\|[\sigma_i^z,X_t]\|_2^2\le0\).
Equality at \(\Delta t\) would make the norm constant, so every
\([\sigma_i^z,X_t]\) vanishes and \(X_t\) remains diagonal.  Its derivative can
remain diagonal only if \([H(s),X_t]=0\).  Hypercube connectivity then makes
\(X_t\) scalar.  Every nonzero traceless operator therefore contracts
strictly.  Attainment in finite dimension, continuity in \(s\), and
compactness of \([0,1]\) give one common \(\kappa<1\).\hfill\(\square\)

State differences are traceless, so Proposition~\ref{prop:dephasing} applies
to every switched sequence and all \(s\in[0,1]\). It explains the dephasing zeros in
Table~\ref{tab:map} for the connected primary model, not arbitrary unital
dissipators.

\FloatBarrier
\section{Orientation within matched rank-one channels}
\label{app:phase-direction}

To vary the supported collective direction at fixed coarse channel invariants,
we use
\[
  c_i(f)=\exp\!\left[2\pi \mathrm{i}f(i-1)/5\right],\qquad
  f\in\{0,0.25,0.5,0.75,1\}.
\]
Here \(f=0\) is equal phase and \(f=1\) is orthogonal to the uniform direction.
Every condition has Kossakowski rank one, spectrum
\((5,0,0,0,0)\), unit diagonal, \(|c_i|=1\), and inverse participation ratio
of the coefficient magnitudes \(\mathrm{IPR}=1/5\). The assigned operator
weight is \(\mathcal B=80\). Four frozen phase-scrambled controls add zero-overlap
directions.

The confirmation uses 32 fresh paired reservoirs, excludes the pilot, and
shares Hamiltonian, input, target, split, and readout across all nine conditions
within each pair. After an 800-input washout, the unchanged protocol uses 600
training and 400 untouched test inputs, delays \(1,\ldots,20\), 45 Pauli
features, and fixed ridge \(10^{-8}\). Four-state audits pass on the first eight lineages.

\begingroup
\setlength{\intextsep}{0pt}
\begin{figure}[H]
\centering
\includegraphics{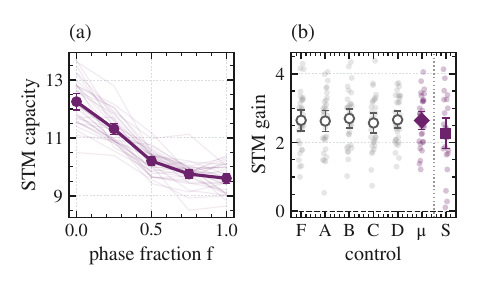}
\caption{\textbf{Rotating the collective direction changes accessible memory.}
\emph{(a)} The coupled direction is rotated from equal phase at \(f=0\) to an
orthogonal direction at \(f=1\). Mean STM decreases while rank, nonzero
spectrum, diagonal weights, coefficient magnitudes, and assigned operator
weight remain fixed. \emph{(b)} Equal-phase-minus-control STM differences. F
is the Fourier endpoint, A through D are phase-scrambled zero-overlap
directions, \(\mu\) is their within-pair mean together with F, and S is the
separate \(N=6\) replication. Pale marks show pairs. Large symbols, lines, and bars
show means and \(95\%\) \(t\) intervals.}
\label{fig:phase-direction}
\end{figure}
\endgroup

Equal-phase STM is \(12.253\), versus \(9.604\) at the Fourier endpoint, a
paired difference of \(2.648\,[2.347,2.950]\) with \(32/32\) wins and
sign-flip \(p=1.0\times10^{-5}\). Against the within-pair mean of all five
zero-overlap directions, the gated effect is \(2.642\,[2.383,2.900]\), again
with \(32/32\) wins and \(p=1.0\times10^{-5}\). The ordered path and
validation-selected-ridge sensitivity agree.

\paragraph{Real sign-balanced replication at a second finite size.}
A separate 24-pair \(N=6\) endpoint replication compares
\(c_{+}=(1,1,1,1,1,1)\) with \(c_{\perp}=(1,1,1,-1,-1,-1)\), where
\(c_{+}^{\dagger}c_{\perp}=0\). At \(\mathcal B=192\), jump count, rank-one
spectrum \(\{6,0,0,0,0,0\}\), trace, diagonal weights, coefficient magnitudes,
processor, input, readout, data, and ridge rule remain fixed. All 24
ground/mixed-state audits and the first six four-state audits pass after the
800-input washout.
Equal-phase STM is \(13.626\) versus \(11.354\), a paired gain of
\(2.271\,[1.826,2.717]\) with \(24/24\) wins and exact sign-test
\(p=1.19\times10^{-7}\). Every delay-wise mean difference is positive. Thus
orientation affects accessible memory beyond these coarse matches, without
implying scaling, global optimality, or a unique mechanism.

%% file: dissipation_qrc.bbl
\begin{thebibliography}{10}

\bibitem{Maass2002}
Wolfgang Maass, Thomas Natschl{\"a}ger, and Henry Markram.
\newblock ``Real-time computing without stable states: A new framework for
  neural computation based on perturbations''.
\newblock \href{https://dx.doi.org/10.1162/089976602760407955}{Neural
  Computation {\bf 14}, 2531--2560}~(2002).

\bibitem{JaegerESP2001}
Herbert Jaeger.
\newblock ``The ``echo state'' approach to analysing and training recurrent
  neural networks''.
\newblock \href{https://dx.doi.org/10.24406/publica-fhg-291111}{Technical
  Report 148}.
\newblock GMD Forschungszentrum Informationstechnik~(2001).

\bibitem{Fujii2017}
Keisuke Fujii and Kohei Nakajima.
\newblock ``Harnessing disordered-ensemble quantum dynamics for machine
  learning''.
\newblock \href{https://dx.doi.org/10.1103/PhysRevApplied.8.024030}{Physical
  Review Applied {\bf 8}, 024030}~(2017).

\bibitem{ChenNurdin2019}
Jiayin Chen and Hendra~I. Nurdin.
\newblock ``Learning nonlinear input--output maps with dissipative quantum
  systems''.
\newblock \href{https://dx.doi.org/10.1007/s11128-019-2311-9}{Quantum
  Information Processing {\bf 18}, 198}~(2019).

\bibitem{Poyatos1996}
J.~F. Poyatos, J.~I. Cirac, and P.~Zoller.
\newblock ``Quantum reservoir engineering with laser cooled trapped ions''.
\newblock \href{https://dx.doi.org/10.1103/PhysRevLett.77.4728}{Physical Review
  Letters {\bf 77}, 4728--4731}~(1996).

\bibitem{Verstraete2009}
Frank Verstraete, Michael~M. Wolf, and J.~Ignacio Cirac.
\newblock ``Quantum computation and quantum-state engineering driven by
  dissipation''.
\newblock \href{https://dx.doi.org/10.1038/nphys1342}{Nature Physics {\bf 5},
  633--636}~(2009).

\bibitem{Pastawski2011}
Fernando Pastawski, Lucas Clemente, and Juan~Ignacio Cirac.
\newblock ``Quantum memories based on engineered dissipation''.
\newblock \href{https://dx.doi.org/10.1103/PhysRevA.83.012304}{Physical Review
  A {\bf 83}, 012304}~(2011).

\bibitem{Harrington2022}
Patrick~M. Harrington, Erich~J. Mueller, and Kater~W. Murch.
\newblock ``Engineered dissipation for quantum information science''.
\newblock \href{https://dx.doi.org/10.1038/s42254-022-00494-8}{Nature Reviews
  Physics {\bf 4}, 660--671}~(2022).

\bibitem{Chen2020}
Jiayin Chen, Hendra~I. Nurdin, and Naoki Yamamoto.
\newblock ``Temporal information processing on noisy quantum computers''.
\newblock \href{https://dx.doi.org/10.1103/PhysRevApplied.14.024065}{Physical
  Review Applied {\bf 14}, 024065}~(2020).

\bibitem{Sannia2024}
Antonio Sannia, Rodrigo Mart{\'i}nez-Pe{\~n}a, Miguel~C. Soriano, Gian~Luca
  Giorgi, and Roberta Zambrini.
\newblock ``Dissipation as a resource for quantum reservoir computing''.
\newblock \href{https://dx.doi.org/10.22331/q-2024-03-20-1291}{Quantum {\bf 8},
  1291}~(2024).

\bibitem{Ricci2026}
Emanuele Ricci, Francesco Monzani, Luca Nigro, and Enrico Prati.
\newblock ``Quantum reservoir computing induced by controllable damping''.
\newblock \href{https://dx.doi.org/10.1038/s41534-026-01229-8}{npj Quantum
  Information {\bf 12}, 107}~(2026).

\bibitem{Gotting2025}
Niclas G{\"o}tting, Steffen Wilksen, Alexander Steinhoff, Frederik Lohof, and
  Christopher Gies.
\newblock ``Connection between memory performance and optical absorption in
  quantum reservoir computing''.
\newblock \href{https://dx.doi.org/10.1103/vp79-8t1l}{Physical Review Letters
  {\bf 135}, 240403}~(2025).

\bibitem{Mujal2023}
Pere Mujal, Rodrigo Mart{\'i}nez-Pe{\~n}a, Gian~Luca Giorgi, Miguel~C. Soriano,
  and Roberta Zambrini.
\newblock ``Time-series quantum reservoir computing with weak and projective
  measurements''.
\newblock \href{https://dx.doi.org/10.1038/s41534-023-00682-z}{npj Quantum
  Information {\bf 9}, 16}~(2023).

\bibitem{Franceschetto2026}
Giacomo Franceschetto, Marcin P{\l}odzie{\'n}, Maciej Lewenstein, Antonio
  Ac{\'i}n, and Pere Mujal.
\newblock ``Harnessing quantum backaction for time-series processing''.
\newblock \href{https://dx.doi.org/10.1103/j7f9-hfsj}{Physical Review X {\bf
  16}, 021002}~(2026).

\bibitem{KobayashiFeedback2024}
Kaito Kobayashi, Keisuke Fujii, and Naoki Yamamoto.
\newblock ``Feedback-driven quantum reservoir computing for time-series
  analysis''.
\newblock \href{https://dx.doi.org/10.1103/PRXQuantum.5.040325}{PRX Quantum
  {\bf 5}, 040325}~(2024).

\bibitem{Cindrak2024}
Saud {\v C}indrak, Brecht Donvil, Kathy L{\"u}dge, and Lina Jaurigue.
\newblock ``Enhancing the performance of quantum reservoir computing and
  solving the time-complexity problem by artificial memory restriction''.
\newblock \href{https://dx.doi.org/10.1103/PhysRevResearch.6.013051}{Physical
  Review Research {\bf 6}, 013051}~(2024).

\bibitem{SanniaNonMarkov2026}
Antonio Sannia, Ricard~Ravell Rodr{\'i}guez, Gian~Luca Giorgi, and Roberta
  Zambrini.
\newblock ``Non-{Markovianity} and memory enhancement in quantum reservoir
  computing''.
\newblock \href{https://dx.doi.org/10.1038/s41534-026-01257-4}{npj Quantum
  Information {\bf 12}, 111}~(2026).

\bibitem{Paparelle2026}
Iris Paparelle, Johan Henaff, Jorge Garc{\'i}a-Beni, {\'E}milie Gillet, Daniel
  Montesinos, Gian~Luca Giorgi, Miguel~C. Soriano, Roberta Zambrini, and
  Valentina Parigi.
\newblock ``Experimental memory control in continuous-variable optical quantum
  reservoir computing''.
\newblock \href{https://dx.doi.org/10.1038/s41566-026-01880-9}{Nature Photonics
  {\bf 20}, 413--420}~(2026).

\bibitem{Palacios2024}
Ana Palacios, Rodrigo Mart{\'i}nez-Pe{\~n}a, Miguel~C. Soriano, Gian~Luca
  Giorgi, and Roberta Zambrini.
\newblock ``Role of coherence in many-body quantum reservoir computing''.
\newblock \href{https://dx.doi.org/10.1038/s42005-024-01859-4}{Communications
  Physics {\bf 7}, 369}~(2024).

\bibitem{Hu2024}
Fangjun Hu, Saeed~A. Khan, Nicholas~T. Bronn, Gerasimos Angelatos, Graham~E.
  Rowlands, Guilhem~J. Ribeill, and Hakan~E. T{\"u}reci.
\newblock ``Overcoming the coherence time barrier in quantum machine learning
  on temporal data''.
\newblock \href{https://dx.doi.org/10.1038/s41467-024-51162-7}{Nature
  Communications {\bf 15}, 7491}~(2024).

\bibitem{MartinezPenaPRL2021}
Rodrigo Mart{\'i}nez-Pe{\~n}a, Gian~Luca Giorgi, Johannes Nokkala, Miguel~C.
  Soriano, and Roberta Zambrini.
\newblock ``Dynamical phase transitions in quantum reservoir computing''.
\newblock \href{https://dx.doi.org/10.1103/PhysRevLett.127.100502}{Physical
  Review Letters {\bf 127}, 100502}~(2021).

\bibitem{Baumann2026}
Markus Baumann, Itamar Fink, Johannes Wittmann, Claudia Linnhoff-Popien, and
  Jonas Stein.
\newblock ``Where a quantum reservoir works: A transferable operating
  band''~(2026).
\newblock  \href{http://arxiv.org/abs/2606.13284}{arXiv:2606.13284}.

\bibitem{Dicke1954}
Robert~H. Dicke.
\newblock ``Coherence in spontaneous radiation processes''.
\newblock \href{https://dx.doi.org/10.1103/PhysRev.93.99}{Physical Review {\bf
  93}, 99--110}~(1954).

\bibitem{Cabot2018}
Albert Cabot, Fernando Galve, V{\'i}ctor~M. Egu{\'i}luz, Konstantin Klemm,
  Sabrina Maniscalco, and Roberta Zambrini.
\newblock ``Unveiling noiseless clusters in complex quantum networks''.
\newblock \href{https://dx.doi.org/10.1038/s41534-018-0108-9}{npj Quantum
  Information {\bf 4}, 57}~(2018).

\bibitem{Cattaneo2019}
Marco Cattaneo, Gian~Luca Giorgi, Sabrina Maniscalco, and Roberta Zambrini.
\newblock ``Local versus global master equation with common and separate baths:
  superiority of the global approach in partial secular approximation''.
\newblock \href{https://dx.doi.org/10.1088/1367-2630/ab54ac}{New Journal of
  Physics {\bf 21}, 113045}~(2019).

\bibitem{Brehm2021}
Jan~David Brehm, Alexander~N. Poddubny, Alexander Stehli, Tim Wolz, Hannes
  Rotzinger, and Alexey~V. Ustinov.
\newblock ``Waveguide bandgap engineering with an array of superconducting
  qubits''.
\newblock \href{https://dx.doi.org/10.1038/s41535-021-00310-z}{npj Quantum
  Materials {\bf 6}, 10}~(2021).

\bibitem{Sheremet2023}
Alexandra~S. Sheremet, Mihail~I. Petrov, Ivan~V. Iorsh, Alexander~V.
  Poshakinskiy, and Alexander~N. Poddubny.
\newblock ``Waveguide quantum electrodynamics: Collective radiance and
  photon-photon correlations''.
\newblock \href{https://dx.doi.org/10.1103/RevModPhys.95.015002}{Reviews of
  Modern Physics {\bf 95}, 015002}~(2023).

\bibitem{CattaneoPRX2023}
Marco Cattaneo, Matteo A.~C. Rossi, Guillermo Garc{\'i}a-P{\'e}rez, Roberta
  Zambrini, and Sabrina Maniscalco.
\newblock ``Quantum simulation of dissipative collective effects on noisy
  quantum computers''.
\newblock \href{https://dx.doi.org/10.1103/PRXQuantum.4.010324}{PRX Quantum
  {\bf 4}, 010324}~(2023).

\bibitem{Gorini1976}
Vittorio Gorini, Andrzej Kossakowski, and E.~C.~George Sudarshan.
\newblock ``Completely positive dynamical semigroups of {N}-level systems''.
\newblock \href{https://dx.doi.org/10.1063/1.522979}{Journal of Mathematical
  Physics {\bf 17}, 821--825}~(1976).

\bibitem{Lindblad1976}
G{\"o}ran Lindblad.
\newblock ``On the generators of quantum dynamical semigroups''.
\newblock \href{https://dx.doi.org/10.1007/BF01608499}{Communications in
  Mathematical Physics {\bf 48}, 119--130}~(1976).

\bibitem{BreuerPetruccione2002}
Heinz-Peter Breuer and Francesco Petruccione.
\newblock ``The theory of open quantum systems''.
\newblock
  \href{https://dx.doi.org/10.1093/acprof:oso/9780199213900.001.0001}{Oxford
  University Press}. Oxford~(2007).

\bibitem{JaegerSTM2001}
Herbert Jaeger.
\newblock ``Short term memory in echo state networks''.
\newblock \href{https://dx.doi.org/10.24406/publica-fhg-291107}{Technical
  Report 152}.
\newblock GMD Forschungszentrum Informationstechnik~(2001).

\bibitem{AtiyaParlos2000}
Amir~F. Atiya and Alexander~G. Parlos.
\newblock ``New results on recurrent network training: Unifying the algorithms
  and accelerating convergence''.
\newblock \href{https://dx.doi.org/10.1109/72.846741}{IEEE Transactions on
  Neural Networks {\bf 11}, 697--709}~(2000).

\bibitem{MackeyGlass1977}
Michael~C. Mackey and Leon Glass.
\newblock ``Oscillation and chaos in physiological control systems''.
\newblock \href{https://dx.doi.org/10.1126/science.267326}{Science {\bf 197},
  287--289}~(1977).

\bibitem{Ahmed2024}
Osama Ahmed, Felix Tennie, and Luca Magri.
\newblock ``Optimal training of finitely sampled quantum reservoir computers
  for forecasting of chaotic dynamics''.
\newblock \href{https://dx.doi.org/10.1007/s42484-025-00261-9}{Quantum Machine
  Intelligence {\bf 7}, 31}~(2025).

\bibitem{HuSampling2023}
Fangjun Hu, Gerasimos Angelatos, Saeed~A. Khan, Marti Vives, Esin T{\"u}reci,
  Leon Bello, Graham~E. Rowlands, Guilhem~J. Ribeill, and Hakan~E. T{\"u}reci.
\newblock ``Tackling sampling noise in physical systems for machine learning
  applications: Fundamental limits and eigentasks''.
\newblock \href{https://dx.doi.org/10.1103/PhysRevX.13.041020}{Physical Review
  X {\bf 13}, 041020}~(2023).

\bibitem{BoydChua1985}
Stephen Boyd and Leon~O. Chua.
\newblock ``Fading memory and the problem of approximating nonlinear operators
  with {Volterra} series''.
\newblock \href{https://dx.doi.org/10.1109/TCS.1985.1085649}{IEEE Transactions
  on Circuits and Systems {\bf 32}, 1150--1161}~(1985).

\bibitem{GrigoryevaOrtega2018}
Lyudmila Grigoryeva and Juan-Pablo Ortega.
\newblock ``Echo state networks are universal''.
\newblock \href{https://dx.doi.org/10.1016/j.neunet.2018.08.025}{Neural
  Networks {\bf 108}, 495--508}~(2018).

\bibitem{Dambre2012}
Joni Dambre, David Verstraeten, Benjamin Schrauwen, and Serge Massar.
\newblock ``Information processing capacity of dynamical systems''.
\newblock \href{https://dx.doi.org/10.1038/srep00514}{Scientific Reports {\bf
  2}, 514}~(2012).

\bibitem{MartinezPenaIPC2023}
Rodrigo Mart{\'i}nez-Pe{\~n}a, Johannes Nokkala, Gian~Luca Giorgi, Roberta
  Zambrini, and Miguel~C. Soriano.
\newblock ``Information processing capacity of spin-based quantum reservoir
  computing systems''.
\newblock \href{https://dx.doi.org/10.1007/s12559-020-09772-y}{Cognitive
  Computation {\bf 15}, 1440--1451}~(2023).

\bibitem{MartinezPenaOrtega2023}
Rodrigo Mart{\'i}nez-Pe{\~n}a and Juan-Pablo Ortega.
\newblock ``Quantum reservoir computing in finite dimensions''.
\newblock \href{https://dx.doi.org/10.1103/PhysRevE.107.035306}{Physical Review
  E {\bf 107}, 035306}~(2023).

\bibitem{MartinezPenaOrtega2025}
Rodrigo Mart{\'i}nez-Pe{\~n}a and Juan-Pablo Ortega.
\newblock ``Input-dependence in quantum reservoir computing''.
\newblock \href{https://dx.doi.org/10.1103/3775-4hfd}{Physical Review E {\bf
  111}, 065306}~(2025).

\bibitem{KobayashiESP2024}
Shumpei Kobayashi, Quoc~Hoan Tran, and Kohei Nakajima.
\newblock ``Extending echo state property for quantum reservoir computing''.
\newblock \href{https://dx.doi.org/10.1103/PhysRevE.110.024207}{Physical Review
  E {\bf 110}, 024207}~(2024).

\bibitem{Minganti2018}
Fabrizio Minganti, Alberto Biella, Nicola Bartolo, and Cristiano Ciuti.
\newblock ``Spectral theory of {Liouvillians} for dissipative phase
  transitions''.
\newblock \href{https://dx.doi.org/10.1103/PhysRevA.98.042118}{Physical Review
  A {\bf 98}, 042118}~(2018).

\bibitem{Xia2026}
Wei Xia, Shuaifan Cao, Xingze Qiu, and Xiaopeng Li.
\newblock ``Quantum magic and non-commutativity as computational resources in
  quantum reservoir computing''~(2026).
\newblock  \href{http://arxiv.org/abs/2607.12035}{arXiv:2607.12035}.

\bibitem{Holm1979}
Sture Holm.
\newblock ``A simple sequentially rejective multiple test procedure''.
\newblock Scandinavian Journal of Statistics {\bf 6}, 65--70~(1979).
\newblock  url:~\url{https://www.jstor.org/stable/4615733}.

\end{thebibliography}
